\documentclass{lmcs}

\keywords{formal verification, cardinality constraints, interpolation}

\newcommand{\diff}[3]{\operatorname{diff}_{#1}^{#2}\left(#3\right)}

\newcommand{\dif}[1]{\operatorname{diff}_{{#1}}}

\newcommand{\wrte}[3]{\operatorname{write}({#1}, {#2}, {#3})}

\usepackage{tikz-cd}
\usetikzlibrary{positioning}

\usepackage{stmaryrd}

\newcommand{\tc}{T_{\text{CAL}}}

\newcommand{\map}[1]{\operatorname{map}_{#1}}

\newcommand{\as}{\operatorname{ARRAY}}

\newcommand{\es}{\operatorname{ELEM}}

\newcommand{\is}{\operatorname{INDEX}}

\newcommand{\ar}[1]{\text{ar}(#1)}

\usepackage{cases}

\begin{document}

\title[Towards Interpolation in Unbounded Array Languages]{Towards Quantifier-Free Interpolation in Array Languages with Unbounded Data Specifications}

\thanks{The first author wishes to thank Dr. Tanja Schindler for various discussions regarding the material presented in \cite{raya_interpolating_2025}. Research supported by the Santander Postdoctoral Fellowship SN260020008RRC}	

\author[R.~Raya]{Rodrigo Raya\lmcsorcid{0000-0002-0866-9257}}[a]
\author[C.~Ringeissen]{Christophe Ringeissen\lmcsorcid{0000-0002-5937-6059}}[b]

\address{Technical University of Madrid, 28031 Madrid, Spain}	

\address{Universit\'e de Lorraine, CNRS, Inria, LORIA, F-54000 Nancy, France}

\begin{abstract}
We investigate quantifier-free interpolation properties for several fragments generalising the extensional theory of arrays. Our results include the (general) quantifier-free interpolation properties of combinatory array logic with iterated diffs and the uniform interpolation property of the simple flat array fragment. To our knowledge, these are the first positive quantifier-free interpolation results obtained for theories of arrays featuring expressive specifications over unbounded domains. 
\end{abstract}
	
\maketitle

\section{Introduction}
Software model checkers employ a variety of automated reasoning techniques to develop correct programs. One of the challenges in model checking software, as opposed to hardware specifications, is the need to handle infinite domains. In this setting, one usually needs to work with a suitable abstraction of the system, specified in a decidable logical fragment. Upon certifying the correctness of the abstraction, a variety of program synthesis techniques can be used to produce source code in a programming language of interest.

One application of interpolation algorithms is to accelerate the discovery of program invariants, which are essential to approximate the semantics of programming patterns such as loops or function calls. For instance, in CEGAR-based model checkers, interpolation algorithms are used to refine abstractions based on the analysis of spurious error traces \cite{ kapur_interpolation_2006}. When these traces and their interpolants are given by quantifier-free formulas, model checkers can make use of fast decision procedures implemented in Satisfiability Modulo Theories (SMT) solvers. Among the most expressive theories supported by these solvers are data structures theories. Unfortunately, not all data structure theories admit quantifier-free interpolation \cite{hoenicke_interpolation_2019, ghilardi_interpolation_2023}, which limits the class of systems for which safety properties can be shown with the help of interpolation. 

Another application of interpolation is program synthesis under the assumption that certain components of the system are known \cite{kuncak_interpolation_2013}.

This paper addresses the property of (general and uniform) quantifier-free interpolation in data structure theories that generalise the extensionality axiom \cite{mccarthy_towards_1993} to functions and relation symbols. This extension of the quantifier-free theory of arrays is particularly useful in the verification of parametrised systems or distributed protocols \cite{dragoi_logic-based_2014, dragoi_psync_2016, gleissenthall_cardinalities_2016, alberti_cardinality_2017, damian_communication-closed_2019, konnov_tla_2019, ghilardi_higher-order_2020}, where one is often interested in guaranteeing that a certain property holds at every component of the system. It turns out that supporting such generalisation incurs only a mild overhead with respect to the extensional fragment \cite{de_moura_generalized_2009}. Moreover, recent work shows that the theory can be combined with cardinality constraints, preserving decidability \cite{ghilardi_higher-order_2020, raya_vmcai_2022}. Further extensions supporting regular relations and aggregation constraints are described in \cite{raya_succinct_2024}. 

In \cite{raya_interpolating_2025}, we developed the observation that the decision procedures in \cite{raya_vmcai_2022, raya_succinct_2024} rely on a kind of reduction to the index and element theory of the data structure, analogous to the reduction approach found in the method of interpolation of \cite{kapur_interpolation_2006}. We used this observation to show weak quantifier-free interpolation properties of combinatory array logic \cite{de_moura_generalized_2009}, the theory of sets with cardinalities \cite{zarba_combining_2005} and the simple flat array fragment  \cite{alberti_cardinality_2017}. We chose these as representatives of a class of quantifier-free theories, which we call collectively algebraic theories of arrays, because they extend the quantifier-free theory of power structures \cite{mostowski_direct_1952, feferman_first_1959} with read and write operations. 

In this paper, we use the methods of \cite{bruttomesso_quantifier-free_2012} to show that these theories admit full quantifier-free interpolation properties (i.e. where the interpolant only contains shared symbols). In this way, we extend the applicability of interpolation algorithms to array theories that include rich and unbounded properties beyond extensional equality. 

\vspace{.25em}

\noindent \textbf{Related work.} Our results close the open question in \cite{ghilardi_interpolation_2023} about the quantifier-free interpolation property of the simple flat array fragment and relate to a number of problems in the specification and verification of fault-tolerant distributed protocols \cite{dragoi_need_2015} and invariant inference \cite{beyer_invariant_2007}.

Early work on quantifier-free interpolation focused on theories of data structures and linked the property to quantifier elimination \cite{kapur_interpolation_2006}. \cite{bruttomesso_quantifier-free_2012} showed how to recover quantifier-free interpolation for theories of arrays by using a relaxed notion of interpolation that only required the existence of quantifier-free interpolants for quantifier-free formulas and extended the theory of arrays with diff functional terms. \cite{bruttomesso_quantifier-free_2014} introduced the notion of general quantifier-free interpolants (similar notions exist in the unification theory \cite{baader_unification_1996}). A parallel line of work develops interpolation algorithms for local theory extensions \cite{sofronie-stokkermans_hierarchic_2005,sofronie-stokkermans_interpolation_2006, sofronie-stokkermans_interpolation_2008}. \cite{totla_complete_2013, totla_complete_2016} introduce the notion of $W$-separable theory extension and show that ground quantifier-free interpolation is available for these theories. \cite{sofronie-stokkermans_interpolation_2016} extends this work showing certain general ground quantifier-free interpolation results, while leaving open the question of uniform interpolation. It is known that BAPA can be seen as a local theory extension \cite{jacobs_hierarchic_2009}, but at present such results are not available for combinatory array logic or the simple flat array fragment. Moreover, note that there seems to be some evidence that the abstract interpolation procedures in \cite{totla_complete_2013} fail for some extensionality lemmas \cite{hoenicke_efficient_2018}. Finally, some work exploring the relationship between Feferman-Vaught type theorems and interpolation properties is available \cite{caicedo_failure_1985, makowsky_compactness_2017}, but none of them target quantifier-free fragments. 

In recent papers \cite{castellanos_joo_axdinterpolator_2021, ghilardi_interpolation_2023}, Castellanos et alii considered interpolation algorithms for a theory of contiguous arrays extended with length and maxdiff operators. They proved that this theory enjoys quantifier-free interpolation but lacks the general interpolation property. By adding constants to the language, they obtain a theory that has these properties. The theories we study in this paper are different, since they do not assume a linear order in the index theory (cf. \cite[Definition~16]{ghilardi_interpolation_2023}). In spite of this, note that there exist decision procedures for more expressive theories of indices \cite{alberti_cardinality_2017, raya_succinct_2024}, which can be obtained using adaptations of the Feferman-Vaught theorem to the quantifier-free fragment. A second difference is that we also study the property of uniform interpolation, which is not present in \cite{ghilardi_interpolation_2023}.

There are practical motivations behind the family of logical theories that we study.  \cite{dragoi_need_2015} argues that the field of fault-tolerant distributed computation needs high-level programming language support that is amenable to automated verification and testing. Further work has addressed this challenge \cite{dragoi_logic-based_2014, dragoi_psync_2016, gleissenthall_cardinalities_2016, alberti_cardinality_2017, damian_communication-closed_2019}, but the verification logics were invariably extensions of interpreted BAPA. A second relevant application of the theories studied is in invariant inference. There is a long line of work in the inference of invariants for theories admitting quantifier elimination \cite{kapur_quantifier-elimination_2006}. Partial versions of invariant inference of data structures occur in \cite{beyer_invariant_2007, gleissenthall_cardinalities_2016, konnov_tla_2019, ghilardi_higher-order_2020}. In contrast to invariant inference procedures, interpolation algorithms construct invariants iteratively, which we hope will make the problem more tractable than as is reported in \cite{beyer_invariant_2007}.

\vspace{.25em}

\noindent \textbf{Contributions.} We contribute to understanding the quantifier-free interpolation properties of array theories. In particular, we
\begin{itemize}
\item identify a new kind of diff functional (diff$_R$ for any relation $R$) term enabling quantifier-free interpolation properties in a family of array theories from the literature.
\item show the existence of quantifier-free interpolants for extensions of the theory of power structures containing componentwise operations (e.g. combinatory array logic).

\item provide examples of non-uniform interpolating theories of arrays and uniform interpolating theories using elimination of existential quantifiers (e.g. the simple flat array fragment). 
\end{itemize}



\vspace{.25em}

\noindent \textbf{Paper organisation.} Section~\ref{section:prelim} introduces preliminaries in first-order logic. Section~\ref{section:arrays} describes the array theories that we analyse. Section~\ref{section:motivation} gives a simple example of the existence of quantifier-free interpolants in combinatory array logic.   Section~\ref{section:model} gives background results on the relation between model-theoretic properties and interpolation. Section~\ref{section:qfinterp} shows quantifier-free interpolation properties for combinatory array logic. Section~\ref{section:gqfinterp} extends the previous results to general quantifier-free interpolation. Section~\ref{section:unifinterp} discusses the uniform interpolation property, obtaining mixed results (cardinality constraints are needed to have uniform interpolation). Section~\ref{section:conclusion} concludes the paper.

\section{Preliminaries}
\label{section:prelim}
We assume the usual syntactic (e.g., signature, variable, term, atom, literal, formula, and sentence) and semantic (e.g., structure, truth, satisfiability, and validity) notions of first-order logic. The equality symbol $=$ is included in all signatures considered below. 

\begin{defi}
A theory $T$ is a pair ($\Sigma, A x_{T}$), where $\Sigma$ is a signature and $A x_{T}$ is a set of $\Sigma$-sentences, called the axioms of $T$. The models of $T$ are the $\Sigma$-structures in which all sentences from $A x_{T}$ are true. 
\end{defi}

\begin{defi}
A universal (resp. existential) sentence is obtained by prefixing a string of universal (resp. existential) quantifiers to a quantifier-free formula. A theory $T$ is universal iff $A x_{T}$ consists of universal sentences. 
\end{defi}

\begin{defi}
A $\Sigma$-formula $\phi$ is $T$-satisfiable if there exists a model $\mathcal{M}$ of $T$ such that $\phi$ is true in $\mathcal{M}$ under a suitable assignment $a$ to the free variables of $\phi$ (in symbols, ($\mathcal{M}, \mathrm{a}) \models \phi$); it is $T$-valid (in symbols, $T \vdash \varphi$ ) if its negation is $T$-unsatisfiable or, equivalently, if and only if $\varphi$ is provable from the axioms of $T$ in a complete calculus for first-order logic. A formula $\varphi_{1} T$-entails a formula $\varphi_{2}$ if $\varphi_{1} \rightarrow \varphi_{2}$ is $T$-valid; the notation used for such $T$-entailment is $\varphi_{1} \vdash_{T} \varphi_{2}$ or simply $\varphi_{1} \vdash \varphi_{2}$, if $T$ is clear from the context. 
\end{defi}

\begin{defi}
If $\Sigma$ is a signature, we use the notation $\mathcal{M}=(M, \mathcal{I})$ for a $\Sigma$-structure, meaning that $M$ is the support of $\mathcal{M}$ and $\mathcal{I}$ is the related interpretation function for $\Sigma$-symbols.
\end{defi}

\begin{defi}
A $\Sigma$-embedding (or, simply, an embedding) between two $\Sigma$-structures $\mathcal{M}=(M, \mathcal{I})$ and $\mathcal{N}=(N, \mathcal{J})$ is any mapping $\mu: M \longrightarrow N$ among the corresponding support sets satisfying the following three conditions: 
\begin{enumerate}
\item $\mu$ is a (sort-preserving) injective function; 

\item $\mu$ is an algebraic homomorphism, that is for every $n$-ary function symbol $f$ and for every $a_{1}, \ldots, a_{n} \in M$, we have $f^{\mathcal{N}}\left(\mu\left(a_{1}\right), \ldots, \mu\left(a_{n}\right)\right)=\mu\left(f^{\mathcal{M}}\left(a_{1}, \ldots, a_{n}\right)\right)$; 

\item $\mu$ preserves and reflects interpreted predicates, i.e. for every $n$-ary predicate symbol $P$, we have $\left(a_{1}, \ldots, a_{n}\right) \in P^{\mathcal{M}}$ iff $\left(\mu\left(a_{1}\right), \ldots, \mu\left(a_{n}\right)\right) \in P^{\mathcal{N}}$. 
\end{enumerate}
\end{defi}

\begin{prop}
Every embedding can be factored in an isomorphism and an inclusion. 
\end{prop}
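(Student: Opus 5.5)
Given an embedding $\mu:\mathcal{M}\to\mathcal{N}$, the plan is to take the substructure of $\mathcal{N}$ determined by the image of $\mu$, show it is a $\Sigma$-structure isomorphic to $\mathcal{M}$ via $\mu$, and then factor $\mu$ as this isomorphism followed by the inclusion into $\mathcal{N}$.

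Let $\mathcal{M}=(M,\mathcal{I})$, $\mathcal{N}=(N,\mathcal{J})$, and set $M' = \mu(M)\subseteq N$, taken sort by sort since $\mu$ is sort-preserving. The first step is to show that $M'$ is closed under the interpretations in $\mathcal{N}$ of all function symbols (constants included, as $0$-ary functions). By condition (2) of the definition of embedding, for $\mu(a_1),\dots,\mu(a_n)\in M'$ we have $f^{\mathcal{N}}(\mu(a_1),\dots,\mu(a_n))=\mu(f^{\mathcal{M}}(a_1,\dots,a_n))\in M'$. We then define $\mathcal{M}'=(M',\mathcal{J}')$ by restricting each $f^{\mathcal{N}}$ to $M'$ and setting $P^{\mathcal{M}'}=P^{\mathcal{N}}\cap M'^n$. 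By construction the inclusion $\iota: M'\hookrightarrow N$ is then an embedding: it is injective, it is a homomorphism because the operations of $\mathcal{M}'$ are restrictions of those of $\mathcal{N}$, and it preserves and reflects predicates by the definition of $P^{\mathcal{M}'}$.

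Next, let $\nu: M\to M'$ be $\mu$ with its codomain restricted to $M'$. This map is a bijection, being injective by condition (1) and surjective onto its image. It is a homomorphism into $\mathcal{M}'$ because the operations of $\mathcal{M}'$ agree with those of $\mathcal{N}$ on $M'$. It preserves and reflects predicates because $(a_1,\dots,a_n)\in P^{\mathcal{M}}$ iff $(\mu a_1,\dots,\mu a_n)\in P^{\mathcal{N}}$ iff $(\nu a_1,\dots,\nu a_n)\in P^{\mathcal{M}'}$. The inverse of $\nu$ is then automatically a homomorphism that preserves and reflects predicates, so $\nu$ is an isomorphism. Finally, $\mu=\iota\circ\nu$ holds by construction.

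There is no real obstacle here. The only point that needs care is closure of $M'$ under the operations, including constants and the sort-wise bookkeeping, since this is what guarantees that $\mathcal{M}'$ is a well-defined substructure.
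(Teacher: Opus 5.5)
Your proposal is correct and takes the same route as the paper. The paper's proof also passes to the image of $\mu$, observes that it is closed under the operations (so it is a substructure of the codomain), and writes $\mu = i \circ g$, with $g$ the isomorphism onto the image and $i$ the inclusion; you have simply written out the routine checks the paper leaves implicit (closure including constants, sort-wise restriction, and that the corestriction and its inverse preserve functions and predicates).
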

\begin{proof}
Every embedding $\mu$ gives an isomorphism $g$ with its image, which is a structure closed under formation of terms. Since the image is included in the co-domain, it follows that $\mu = i \circ g$ where $i$ is the inclusion mapping.    
\end{proof}

\begin{defi}
If $M \subseteq N$ and the embedding $\mu: \mathcal{M} \longrightarrow \mathcal{N}$ is just the identity inclusion, we say that $\mathcal{M}$ is a substructure of $\mathcal{N}$ or that $\mathcal{N}$ is an superstructure of $\mathcal{M}$. 
\end{defi}

Embeddings preserve
existential and universal formulas in the following sense.

\begin{prop}
Suppose $\eta : \mathcal{M} \to \mathcal{N}$ is an embedding, $a \in M$, and $\phi(v)$ is an existential formula and $\psi(v)$ is a universal formula. If $\mathcal{M} \models \phi(a)$, then $\mathcal{N} \models \phi(\eta(a))$.
 If $\mathcal{N} \models \psi(\eta(a))$, then $\mathcal{M} \models \psi(a)$.
\end{prop}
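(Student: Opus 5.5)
The plan is to first establish that embeddings preserve and reflect quantifier-free formulas, and then obtain both claims by handling the quantifier prefix. The basic lemma is that for every term $t(v)$ and tuple $a \in M$ we have $\eta(t^{\mathcal{M}}(a)) = t^{\mathcal{N}}(\eta(a))$. This follows by induction on $t$: variables are trivial and constants are the $0$-ary case of function symbols. The inductive step for $f(t_1,\dots,t_n)$ is exactly the homomorphism condition of the definition of embedding, applied to the values of the $t_i$ given by the induction hypothesis.

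Next, I will show by induction on quantifier-free $\chi(v)$ that $\mathcal{M} \models \chi(a)$ iff $\mathcal{N} \models \chi(\eta(a))$. Two atomic cases arise.
\begin{itemize}
\item For an equation $t_1 = t_2$, the term lemma reduces it to $t_1^{\mathcal{M}}(a) = t_2^{\mathcal{M}}(a)$ iff $\eta(t_1^{\mathcal{M}}(a)) = \eta(t_2^{\mathcal{M}}(a))$. The forward direction is trivial, and the backward direction uses injectivity of $\eta$.
\item For a predicate atom $P(t_1,\dots,t_n)$, the equivalence is exactly condition (3): predicates are preserved and reflected.
\end{itemize}
The Boolean connectives then go through immediately because we have a biconditional. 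This biconditional is the only delicate point. It is also why injectivity and reflection of predicates are required, not just homomorphism, since negated atoms must be transferred as well. This is the step I regard as the main obstacle, though it is routine once the conditions are lined up correctly.

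Finally, for the existential case write $\phi(v) = \exists w\, \chi(v,w)$ with $\chi$ quantifier-free. If $\mathcal{M} \models \phi(a)$, pick witnesses $b \in M$ with $\mathcal{M} \models \chi(a,b)$. Then $\mathcal{N} \models \chi(\eta(a),\eta(b))$, so $\eta(b)$ witnesses $\mathcal{N} \models \phi(\eta(a))$. For the universal case write $\psi(v) = \forall w\, \chi(v,w)$ and suppose $\mathcal{N} \models \psi(\eta(a))$. Given any $b \in M$, we have $\mathcal{N} \models \chi(\eta(a),\eta(b))$, and reflection gives $\mathcal{M} \models \chi(a,b)$; hence $\mathcal{M} \models \psi(a)$. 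Alternatively, this case follows from the existential one by taking $\phi = \neg\psi$, which is equivalent to an existential formula, and applying contraposition. Multi-sortedness causes no change, since $\eta$ is sort-preserving and witnesses keep their sorts.
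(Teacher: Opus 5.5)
Your proof is correct and takes essentially the same route as the paper, which simply cites the standard textbook argument (Marker, Theorem~2.5 and Corollary~2.6). That argument likewise shows by induction that embeddings preserve and reflect quantifier-free formulas, then obtains the existential case by pushing witnesses forward along $\eta$ and the universal case by pulling instances back (or by negation).
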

\begin{proof}
It follows from Theorem~2.5 and Corollary~2.6 in \cite{david_marker_invitation_2024}.
\end{proof}

As a consequence, 

\begin{cor} \label{cor:preserv}
The truth of a universal (resp. existential) sentences is preserved through substructures (resp. through superstructures).
\end{cor}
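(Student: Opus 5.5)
The plan is to derive the corollary directly from the preceding proposition, applied to the inclusion map. Let $\mathcal{M}$ be a substructure of $\mathcal{N}$. By definition the identity inclusion $\iota: M \hookrightarrow N$ is an embedding $\iota : \mathcal{M} \to \mathcal{N}$. A sentence has no free variables, so the tuple $a$ in the proposition may be taken empty. Alternatively, one can pad the formula with a dummy variable $v$ that does not occur in it and pick any $a \in M$; this presupposes that the relevant sorts are nonempty, which is standard for first-order structures.

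For the universal case, suppose $\psi$ is a universal sentence true in $\mathcal{N}$. Viewing $\psi$ as $\psi(v)$ with $v$ not occurring, we have $\mathcal{N} \models \psi(\iota(a))$. The second clause of the proposition then gives $\mathcal{M} \models \psi(a)$, that is, $\mathcal{M} \models \psi$. So universal sentences pass from a structure down to its substructures.

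For the existential case, suppose $\phi$ is an existential sentence true in $\mathcal{M}$. The first clause of the proposition, applied to the same embedding, gives $\mathcal{N} \models \phi$. So existential sentences pass from a structure up to its superstructures.

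There is no real obstacle; the only point to handle is the passage from formulas with a parameter $a$ to sentences, which is settled by the dummy-variable remark above. One could also avoid that remark by recalling the direct argument. For the universal case, a witness assignment falsifying the quantifier-free matrix in $\mathcal{M}$ is also an assignment in $\mathcal{N}$. Quantifier-free formulas are preserved and reflected by embeddings, by conditions (2) and (3) of the definition of embedding and induction on terms and formulas. Hence the same assignment falsifies the matrix in $\mathcal{N}$, and the existential case is dual. I would mention this only as a remark, since the proposition already suffices.
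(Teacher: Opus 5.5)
Your proposal is correct and follows the paper's route: the paper states this corollary as an immediate consequence of the preceding proposition on embeddings, specialised to the inclusion of a substructure, and gives no further argument. Your dummy-variable remark, which passes from formulas with a parameter $a$ to sentences, and your direct argument via quantifier-free formulas both fill in details the paper leaves implicit.
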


We conclude with the definition of quantifier-free interpolants.

\begin{defi}
A theory $T$ is said to admit quantifier-free interpolation (or, equivalently, to have quantifier-free interpolants) if and only if for every pair of quantifier free formulae $\phi, \psi$ such that $\psi \wedge \phi$ is not $T$ satisfiable, there exists a quantifier free formula $\theta$, called an interpolant, such that: (i) $\psi$ $T$-entails $\theta$; (ii) $\theta \wedge \phi$ is not $T$-satisfiable: (iii) only variables occurring both in $\psi$ and in $\phi$ occur in $\theta$.
\end{defi}

\section{Theories of Arrays}
\label{section:arrays}
Specification languages for arrays have existed since the early days of program verification \cite{king_program_1969}, borrowing ideas of McCarthy \cite{mccarthy_towards_1993}. The theory of arrays with an extensionality axiom was studied in \cite{stump_decision_2001}. Bruttomesso et alii \cite{bruttomesso_quantifier-free_2012} studied quantifier-free interpolation for this extensional array theory. All of these theories provide syntax to specify reads and writes, but are unable to specify richer properties that hold globally in the array.

In this paper, we investigate array fragments that specify properties that hold globally in the arrays. Our starting point for such array theories is the combinatory array logic fragment \cite{de_moura_generalized_2009}. We also investigate the richer flat array fragment \cite{alberti_cardinality_2017} since it can be used to encode regular properties on the index set and trees \cite{raya_succinct_2024}. In describing these fragments, we assume that the component specifications are given with uninterpreted functions and relations. This is because it is not yet clear under which conditions one can derive interpolation algorithms for specific data theories, a question we plan to answer following the methodology in \cite{bruttomesso_quantifier-free_2014}. At the same time, we plan to investigate whether the stably infinite assumption used in \cite{bruttomesso_quantifier-free_2014} can be lifted. In our research on decision procedures \cite{raya_polite_2025}, we found that the stably infinite assumption is not necessary for combining decision procedures in the context of the studied array theories.

\begin{figure}
\centering
\begin{align*}
F & ::= F_1 \land F_2 \, | \, F_1 \lor F_2 \, | \, \lnot F \, | \, \operatorname{map}_R(\overline{A}) \, | \, A[i] = A[j] \, A[i] = e \\
A & ::= a \, | \, \operatorname{write}(A,i,E) \, | \, K(e) \, | \, \operatorname{map}_f(\overline{A}) 
\end{align*}
\caption{$T_{\text{CAL}}$'s syntax.}
\label{fig:cal-syntax}
\end{figure}

Combinatory array logic \cite{de_moura_generalized_2009}, $\tc$, uses a quantifier-free many-sorted language, comprising an index sort, an element sort, and an array sort. In Figure~\ref{fig:cal-syntax}, we use instances of letter $i$ to refer to variables of the index sort, letter $e$ to refer to variables of sort element and letter $a$ to refer to a variable of the array sort. 

For the purpose of this paper, we will fix a collection of function symbols $f_1,\ldots,f_p$ and relation symbols $R_1,\ldots,R_k$, with the aim of describing a first-order theory. In this way, we will avoid having to extend the proofs relating  amalgamation and quantifier-free interpolation. 

Combinatory array logic provides the following functions and operators over the parametrised signature. Function symbol $\operatorname{write}$ takes as input an array, an index and an element and returns an array. Function read, $\_[\_]$, takes an array and an index and returns an element. Function $K$ takes as input an element and returns an array which is constantly equal to that element. Function $\operatorname{map}_f(\overline{A})$ takes as input a tuple of arrays and returns an array where the $i$-th component is the result of applying function $f$ to the $i$-th component of the input arrays. Relation $\operatorname{map}_R(\overline{A})$ takes as input a tuple of arrays and returns true if and only if relation $R$ holds for each tuple of elements in the $i$-th component. Formally, the introduced symbols satisfy the following axioms. \begin{align*}
& \forall a, i, e . \operatorname{write}(a, i, e)[i] = e \\
& \forall a, i, j, e . i = j \vee \operatorname{write}(a, i, e)[j] = a[j] \\
& \forall e, i . K(e)[i] = e \\
& \forall a_{1}, \ldots, a_{k}, i.\operatorname{map}_{f}\left(a_{1}, \ldots, a_{k}\right)[i] = f\left(a_{1}[i], \ldots, a_{k}[i]\right) \\
& \forall a_{1}, \ldots, a_{k}. \operatorname{map}_R(a_1,\ldots,a_k) \leftrightarrow \forall i. R(a_1[i],\ldots,a_k[i])
\end{align*}

For the purpose of illustrating the quantifier-free interpolation properties, we will consider a purely relational signature (i.e. without function symbols) and thus, we will work with the following simplified axiomatisation:
\begin{align*}
\forall a, i, e . & write(a, i, e)[i]=e  \tag{3.1} \\
\forall a, i, j, e . & i \neq j \to write(a, i, e)[j] = a[j]  \tag{3.2} \\
\forall a_1,\ldots,a_n,i. &  \text{map}_R(a_1,\ldots,a_n) \rightarrow  R(a_1[i],\ldots,a_n[i]) \tag{3.3} \label{ax:map} \\
\forall a_1,\ldots,a_n. &  \text{map}_R(a_1,\ldots,a_n) \lor  \lnot R(a_1[\diff{R}{1}{a_1,\ldots,a_n}],\ldots,a_n[\diff{R}{1}{a_1,\ldots,a_n}]) \tag{3.4} \label{ax:diff} \\
\forall a_1,\ldots,a_n, i. &  \begin{aligned}[t]
& (\diff{R}{j-1}{a_1,\ldots,a_n} \neq \diff{R}{j}{a_1,\ldots,a_n} \land \\
& \lnot R(a_1[\diff{R}{j}{a_1,\ldots,a_n})],\ldots,a_n[\diff{R}{j}{a_1,\ldots,a_n})]) \lor \\ & (\diff{R}{j-1}{a_1,\ldots,a_n} = \diff{R}{j}{a_1,\ldots,a_n} \land  \\ & (i \neq \diff{R}{1}{a_1,\ldots,a_n} \land \ldots \land i \neq \diff{R}{j-1}{a_1,\ldots,a_n} \to R(a_1[i],\ldots,a_n[i]))) \label{ax:iterated}
\end{aligned} 
\tag{3.5}
\end{align*}
where the last axiom is formulated for each natural number $j > 0$.

Note that Axioms~\ref{ax:map} and \ref{ax:diff} originate from a transformation of the original axiom of \cite{de_moura_generalized_2009}
\begin{align*}
\forall a_1,\ldots,a_n.   \map{R}(a_1,\ldots,a_n) \leftrightarrow  \forall i. R(a_1[i],\ldots,a_n[i])
\end{align*}
in order to ensure that the theory is universal. The left to right implication is kept and the right to left is Skolemised using the symbol $\dif{R}$. Moreover, we have iterated $\dif{R}$ operators. An intuition on why these are necessary is provided in the next section.

The simple flat array fragment \cite{alberti_cardinality_2017, ghilardi_higher-order_2020, raya_vmcai_2022}, $T_{\text{F}}$, uses a quantifier-free many-sorted language, comprising a sort for indices, a sort for elements, a sort for integers, a sort for sets of indices, and a sort for arrays.  In Figure~\ref{fig:alberti-syntax}, we use (indexed) instances of letter $i$ to refer to variables of the index sort, letter $x$ to refer to variables of sort set, letter $k$ to refer to a variable of the integer sort and letter $a$ to refer to a variable of the array sort. The logic has as atoms set comprehensions of the form $\{ i \mid \varphi(\overline{a}[i]) \}$ where $\overline{a}$ is a set of array variables, $\overline{a}[i]$ denotes the set of elements in the $i$-th component of the array variables and $\varphi$ is a formula in the theory of elements (here assumed to be the theory of equality with uninterpreted function symbols). These set comprehensions are combined with Boolean algebra expressions and cardinality constraints. The semantics of these symbols is the standard.

\begin{figure}
\centering
\begin{align*}
F & ::= A \, | \, F_1 \land F_2 \, | \, F_1 \lor F_2 \, | \, \lnot F \\
A & ::=
i_1 = i_2 \, | \, i \in B \, | \,  B_1 \subseteq B_2 \, | \, T_1 \le T_2 \\
B & ::= x \, | \, \emptyset \, | \, B_1 \cup B_2 \, | \, B_1 \cap B_2 \, | \, B^c \, | \, \{ i \mid \varphi(\overline{a}[i]) \} \\
T & ::= k \, | \, K \, | \, T_1 + T_2 \, | \, K \cdot T \, |  \, |B| \\
K & ::= \ldots \, | \, -2 \, | \, -1 \, | \, 0 \, | \, 1 \, | \, 2 \, | \, \ldots
\end{align*}
\caption{$T_{\text{F}}$'s syntax}
\label{fig:alberti-syntax}
\end{figure}

\section{Motivating Example}
\label{section:motivation}
With the intent of providing the reader with a concrete non-trivial example of quantifier-free interpolant and demonstrating the use of the $\dif{R}$ operators let us consider the following two formulas. The first formula $\phi$ is given by  
\[
\map{R}(a) \land b = \wrte{a}{i}{v}
\]
The second formula $\psi$ is given by 
\[
\lnot R(b[j]) \land \lnot R(b[k]) \land j \neq k
\]

It is clear that $\phi \land \psi$ is unsatisfiable in $\tc$.

We wish to demonstrate that there exists an interpolant in the language of combinatory array logic with $\dif{R}$ operators. An interpolant formula can only use the common variable $b$ and must express the fact that $b$ contains zero or one occurrence of an element satisfying the relation $\lnot R$.

Note that an interpolation algorithm based on grounding the map relations over the (global) set of index variables may introduce $i,j,k$ as common variables, even though they are local to each $\phi$ and $\psi$. Thus, this approach is not possible if we want to produce strong interpolants.

At first, it may seem that the following formula works as an interpolant.
\[
map_R(b) \lor (\lnot R(b[\diff{R}{}{b}]) \land \map{R}(\wrte{b}{\diff{R}{}{b}}{b[\diff{\lnot R}{}{b}]})) 
\]
However, this formula does not work when the model of $\phi$ is an array of a single position and content $v$ satisfying $\lnot R$. The reason is that $b[\diff{\lnot R}{}{b}]$ will return $v$ and therefore neither the first nor the second disjunct hold. 

Instead, we resort to the notion of iterated diff functions, first described in \cite{ghilardi_interpolation_2023}. Then, 
\[
map_R(b) \lor (\lnot R(b[\diff{R}{1}{b}]) \land \diff{R}{1}{b} = \diff{R}{2}{b} ) 
\]
works as a quantifier-free interpolant where the iterated diff gives the next index that has an element not satisfying $R$ if such an index exists and otherwise returns one of the previous diff indices. In our case, since we have that $\diff{R}{1}{b} = \diff{R}{2}{b}$ holds, we know that there are no more indices such that $\lnot R$ holds. Thus, the array has either zero or one positions where $\lnot R$ holds, as required.

\section{Amalgamation implies quantifier-free interpolation}
\label{section:model}

We now review the relationship between amalgamation and quantifier-free interpolation. In \cite{bruttomesso_quantifier-free_2012}, the existence of quantifier-free interpolants is shown to be implied by the property of amalgamation of the models of the theory.

\begin{defi}
\label{def:amalgamation}
A theory $T$ is said to have the amalgamation property if and only if whenever we are given embeddings
$$
\mu_{0}: \mathcal{N} \longrightarrow \mathcal{M}_{0}, \quad \mu_{1}: \mathcal{N} \longrightarrow \mathcal{M}_{1}
$$
among the models $\mathcal{N}, \mathcal{M}_{0}, \mathcal{M}_{1}$ of $T$, then there exists a further model $\mathcal{M}$ of $T$ endowed with embeddings
$$
\nu_{0}: \mathcal{M}_{0} \longrightarrow \mathcal{M}, \quad \nu_{1}: \mathcal{M}_{1} \longrightarrow \mathcal{M}
$$
such that $\nu_{0} \circ \mu_{0}=\nu_{1} \circ \mu_{1}$. 
\end{defi}

\begin{prop} \label{prop:inclusion}
Up to isomorphism, we can limit ourselves in the above definition to the case in which $\mu_0, \mu_1$ are inclusions, i.e. to the case in which $\mathcal{N}$ is just a substructure of both $\mathcal{M}_0, \mathcal{M}_1$.
\end{prop}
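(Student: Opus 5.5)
The plan is to show that the special case where $\mu_0,\mu_1$ are inclusions already yields the general amalgamation property. We transport the data along isomorphisms, apply the special case, and compose back. The tool is the earlier proposition that every embedding factors as an isomorphism followed by an inclusion. We also use that truth of sentences, in particular the axioms of $T$, is invariant under isomorphism, so isomorphic copies of models of $T$ are again models of $T$.

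Let $\mu_0: \mathcal{N} \to \mathcal{M}_0$ and $\mu_1: \mathcal{N} \to \mathcal{M}_1$ be embeddings of models of $T$.

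\textbf{Step 1: a common substructure.} First I replace $\mathcal{M}_0$ by an isomorphic copy $\mathcal{M}_0'$ containing $\mathcal{N}$ as a substructure, with $\mu_0$ becoming the inclusion. To build it, rename elements: take the support $N \cup (M_0 \setminus \mu_0(N))$, after making $M_0\setminus \mu_0(N)$ disjoint from $N$ by tagging elements if necessary. The bijection $g_0: \mathcal{M}_0' \to \mathcal{M}_0$ sends $n \mapsto \mu_0(n)$ and is the identity (up to tags) elsewhere. Interpret every symbol of $\Sigma$ on $\mathcal{M}_0'$ by pulling back along $g_0$. Then $g_0$ is an isomorphism, and $g_0$ composed with the inclusion $\mathcal{N}\hookrightarrow\mathcal{M}_0'$ equals $\mu_0$. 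Because $\mu_0$ is an embedding, the structure induced on $N \subseteq M_0'$ coincides with $\mathcal{N}$ (homomorphism condition plus preservation and reflection of predicates), so $\mathcal{N}$ really is a substructure of $\mathcal{M}_0'$. Do the same for $\mathcal{M}_1$, obtaining $\mathcal{M}_1'$ and an isomorphism $g_1$. Sorts are respected since all maps are sort-preserving.

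\textbf{Step 2: amalgamate the copies.} Now $\mathcal{N}$ is a common substructure of the models $\mathcal{M}_0', \mathcal{M}_1'$ of $T$. The restricted amalgamation gives a model $\mathcal{M}$ and embeddings $\nu_0', \nu_1'$ with $\nu_0'|_N = \nu_1'|_N$, which is the commutation condition for inclusions.

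\textbf{Step 3: compose back.} Set $\nu_h = \nu_h' \circ g_h^{-1}$ for $h = 0,1$. Compositions of embeddings with isomorphisms are embeddings. For $n \in N$,
\[\nu_h(\mu_h(n)) = \nu_h'(g_h^{-1}(\mu_h(n))) = \nu_h'(n),\]
so $\nu_0\circ\mu_0 = \nu_1\circ\mu_1$, as required.

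The only delicate step is the set-theoretic renaming in Step 1. It needs the disjointness trick to keep the map a bijection, and it needs a check that the pulled-back interpretation restricts to the original one on $N$. Everything else is routine bookkeeping.
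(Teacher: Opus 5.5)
Your proposal is correct and follows the paper's proof. It builds the same renamed copy $\mathcal{M}_h'$ with carrier $N \cup (M_h\setminus\mu_h(N))$, amalgamates the copies, and composes with the isomorphisms exactly as in the paper's diagram. Your tagging step, which makes $M_h\setminus\mu_h(N)$ disjoint from $N$, and your explicit pull-back of the interpretations along $g_h$ make precise two points the paper's proof passes over: without disjointness, the paper's map need not be a bijection.
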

\begin{proof}
It is sufficient to see that for each embedding $\mu: \mathcal{N} \to \mathcal{M}$, we can build a structure $\mathcal{M}'$ isomorphic to $\mathcal{M}$ whose carrier set contains the elements of $N$. 

The carrier set of $\mathcal{M}'$ consists of the elements of $N$ plus the elements of $M$ minus those in the image of $\mu$. Then the isomorphism between $\mathcal{M}'$ and $\mathcal{M}$ sends the elements of $N$ to the corresponding elements in $Img(\mu)$ and the elements outside $N$ to the same in the carrier of $\mathcal{M}$. Bijectivity follows from the injectivity of $\mu$. The functions and relations on $\mathcal{N}$ are preserved by the definition of $\mu$. Any other function or relation is defined in $\mathcal{M}'$ as in $\mathcal{M}$.

Finally, if we had the amalgamation properties for structures $\mathcal{N}, \mathcal{M}_0'$ and $\mathcal{M}_1'$ then the amalgamation property for $\mathcal{M}_0$ and $\mathcal{M}_1$ follows from the commutativity of the following diagram.

\begin{center}
\begin{tikzpicture}[node distance=1.25cm]

\node (M1) {$\mathcal{M}_0$};
\node[right=of M1] (A) {$\mathcal{M}_0'$};
\node[right=of A] (B) {$\mathcal{M}_1'$};
\node[right=of B] (M2) {$\mathcal{M}_1$};

\node[below=of A, xshift=0.75cm] (N) {$\mathcal{N}$};

\node[above=of A, xshift=0.75cm] (T) {$\mathcal{M}$};

\draw[->] (N) -- node[left] {$i$} (A);
\draw[->] (N) -- node[right] {$i$} (B);

\draw[->] (N) to[bend left=25]
    node[below left] {$\mu_0$} (M1);

\draw[->] (N) to[bend right=25]
    node[below right] {$\mu_1$} (M2);

\draw[dashed,->] (A) -- (T);
\draw[dashed,->] (B) -- (T);

\draw (M1) -- node[above] {$\cong$} (A);
\draw (M2) -- node[above] {$\cong$} (B);


\end{tikzpicture}%
\end{center}
\end{proof}



In the following, we denote by $\tc$ the theory of combinatory array logic with diff operators. 

\begin{thm} \label{thm:amalg-interp}
If $\tc$ has the amalgamation property then $\tc$ admits quantifier-free interpolants.
\end{thm}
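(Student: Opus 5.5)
We argue by contraposition, following the standard argument of \cite{bruttomesso_quantifier-free_2012}. Two features of $\tc$ make this work. First, $\tc$ is universal: Axioms 3.1--3.5 are universal sentences over the signature including the $\dif{R}^j$ symbols. Second, by Corollary~\ref{cor:preserv}, substructures of models of $\tc$ are again models. Let $\psi(\overline{x},\overline{z})$ and $\phi(\overline{y},\overline{z})$ be quantifier-free formulas with $\psi\wedge\phi$ $\tc$-unsatisfiable, where $\overline{z}$ are the shared variables. Treat all variables as fresh constants. Let $\Gamma$ be the set of quantifier-free formulas $\theta(\overline{z})$ such that $\psi\vdash_{\tc}\theta$. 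We suppose that no $\theta\in\Gamma$ satisfies $\theta\wedge\phi$ being $\tc$-unsatisfiable, and aim for a contradiction.

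\textbf{Step 1 (compactness).} We first show that $\tc\cup\{\phi\}\cup\Gamma$ is satisfiable. Suppose not. By compactness, finitely many $\theta_1,\dots,\theta_n\in\Gamma$ already conflict with $\phi$. Their conjunction is again a quantifier-free formula in $\overline{z}$ entailed by $\psi$, so it is an interpolant, contradicting our assumption. Now take a model $\mathcal{M}_1\models\phi\wedge\Gamma$.

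\textbf{Step 2 (the shared substructure).} Let $\mathcal{N}$ be the substructure of $\mathcal{M}_1$ generated by the interpretations of $\overline{z}$. Since $\tc$ is universal, $\mathcal{N}\models\tc$. Let $\Delta$ be the diagram of $\mathcal{N}$, namely the ground literals over $\overline{z}$ true in $\mathcal{N}$; note that $\mathcal{N}$ is generated by $\overline{z}$ under terms.

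We claim $\tc\cup\{\psi\}\cup\Delta$ is satisfiable. Suppose not. By compactness some finite conjunction $\delta$ of literals from $\Delta$ satisfies $\psi\vdash_{\tc}\lnot\delta$. Then $\lnot\delta\in\Gamma$, so $\mathcal{M}_1\models\lnot\delta$. But $\delta$ is a conjunction of ground literals true in $\mathcal{N}$, and these are preserved upward along the inclusion into $\mathcal{M}_1$, so $\mathcal{M}_1\models\delta$, a contradiction.

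Take $\mathcal{M}_0\models\psi\wedge\Delta$. By the diagram lemma, the map sending $t^{\mathcal{N}}$ to $t^{\mathcal{M}_0}$ for every term $t(\overline{z})$ is well defined and injective, and it preserves and reflects relations. This holds because $\Delta$ contains all equalities, disequalities, and relational literals, including the $\map{R}$ atoms. Hence this map is an embedding $\mathcal{N}\to\mathcal{M}_0$. By Proposition~\ref{prop:inclusion}, we may assume both embeddings are inclusions.

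\textbf{Step 3 (amalgamation).} Apply the amalgamation property to $\mathcal{N}\subseteq\mathcal{M}_0,\mathcal{M}_1$. This yields $\mathcal{M}\models\tc$ with embeddings $\nu_0,\nu_1$ that agree on $\mathcal{N}$. Interpret $\overline{x}$ via $\nu_0$, $\overline{y}$ via $\nu_1$, and $\overline{z}$ via the common image; the latter is consistent precisely because $\nu_0$ and $\nu_1$ agree on $\mathcal{N}$. Quantifier-free formulas are preserved by embeddings, which is the existential case of the earlier proposition. Hence $\mathcal{M}\models\psi\wedge\phi$, contradicting unsatisfiability.

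\textbf{Expected obstacle.} The delicate point is Step 2: checking that $\mathcal{N}$ is a genuine model of $\tc$ and that the diagram argument yields a true embedding. This is exactly where we need the theory to be universal, including the Skolem symbols $\dif{R}^j$. Without those symbols, the substructure generated by $\overline{z}$ might fail the right-to-left direction of the $\map{R}$ axiom. A secondary point is many-sortedness: the generated substructure might have an empty sort, for instance no index terms. We handle this by working with nonempty sorts, or by noting that the $\dif{R}^j$ terms supply index elements whenever an array term exists.
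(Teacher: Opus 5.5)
Your proposal is correct and is essentially the paper's approach: the paper proves this theorem only by citing \cite[Theorem~2.3]{bruttomesso_quantifier-free_2012}, and your compactness, diagram and amalgamation argument is the standard proof behind that result, with the useful addition of making explicit that the hypothesis used is the universality of $\tc$ (via the Skolem symbols $\dif{R}^j$), so that the substructure generated by the shared variables is again a model. Your empty-sort caveat is only partially discharged: when no shared variable is an array, the generated substructure has an empty array sort and the diff terms do not help. The paper's citation, however, leaves this convention implicit as well.
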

\begin{proof}
This follows from \cite[Theorem~2.3]{bruttomesso_quantifier-free_2012}.
\end{proof}

\begin{exa}
Revisiting the motivating example of Section~\ref{section:motivation}, observe that combinatory array logic with diffs (but no iterated diffs) would not have the amalgamation property. This can be seen taking as the base of the amalgamation diagram a structure containing a single array $b$ of length one and satisfying a given relation $\lnot R$. This structure is embedded into a structure $\mathcal{M}_0$ containing the same array $b$ and a structure $\mathcal{M}_1$ containing an array $b$ that now has two components ($i$ and $j$) satisfying $\lnot R$. The interpretation of $\tc$'s symbols is the standard one. Then it is clear that $\mathcal{M}_0 \models \phi$ and $\mathcal{M}_1 \models \psi$. If the amalgamation property was true, there would be a model $\mathcal{M}$ of $\tc$ satisfying both $\phi$ and $\psi$. But this is not possible since $\phi \land \psi$ is $T$-inconsistent.

Note how the above scenario is not possible with iterated diffs. Both $\dif{R}^1$ and $\dif{R}^2$ would have to be mapped to the same index in $\mathcal{M}_1$, but this would not satisfy the axiomatisation of $\dif{R}^2$.
\end{exa}

\section{ Quantifier-free Interpolation}
\label{section:qfinterp}
To establish the amalgamation property for combinatory array logic, we first do a careful study of the shape of its models. 

\begin{defi}
A model of $\tc$ is \textit{standard} when the sort ARRAY is interpreted as the set of all functions from the sort INDEX to the sort ELEM, arrays are interpreted as functions, $\cdot[\cdot]$ as function application, $\wrte{\cdot}{\cdot}{\cdot}$ as the point-wise update operation, $\map{R}$ as the pointwise relation $R$, diff$_R^1$ as a function mapping a tuple of functions to an index where $\lnot R$ holds and to an arbitrary index if no such index exists and diff$_R^i$ for $i > 0$ is a function mapping a tuple of functions to a new index where $\lnot R$ holds or equal to the previous index in the iterated sequence in which case the rest of the indices of the array hold values where $\lnot R$ is true.

A  \textit{functional} model is a model $\mathcal{M}$ in which ARRAY${ }^{\mathcal{M}}$ is a subset of the set of functions from INDEX${ }^{\mathcal{M}}$ to ELEM${ }^{\mathcal{M}}$ and in which $\cdot[\cdot]^{\mathcal{M}}, \wrte{\cdot}{\cdot}{\cdot}^{\mathcal{M}}$, $\map{R}$ and diff$_R^i$ for $i \ge 0$ have their standard meaning.
\end{defi}
 
\begin{prop} \label{prop:embedding}
Every model of $\tc$ embeds into a standard one. 
\end{prop}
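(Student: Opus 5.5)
The plan follows the analogous argument for the extensional theory of arrays in \cite{bruttomesso_quantifier-free_2012}, in two steps. First we turn an arbitrary model $\mathcal{M}$ of $\tc$ into a functional model. Then we embed that functional model into a standard one.

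\textbf{Step 1: quotient by extensionality.} Call two arrays $a,b\in \as^{\mathcal{M}}$ equivalent when $a[i]=b[i]$ for every $i\in\is^{\mathcal{M}}$. We must check that the symbols are compatible with this relation. For $\operatorname{write}$ this follows from axioms (3.1) and (3.2). For $\map{R}$, axiom (\ref{ax:map}) gives one direction and axiom (\ref{ax:diff}) the other. Hence $\map{R}(\overline{a})$ holds iff $R(\overline{a}[i])$ holds for all $i$, which is a property of the read values only.

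The $\dif{R}^j$ are the delicate symbols, because two arrays with equal reads need not have equal diff values in $\mathcal{M}$. If the map $a\mapsto \lambda i.\,a[i]$ is injective, no choice is needed. In general we fix a representative $\overline{a}$ of each class of tuples and define $\dif{R}^j$ on that class by its value on the representative. Axioms (\ref{ax:diff}) and (\ref{ax:iterated}) mention only reads and equalities between indices, so they remain true for every tuple in the class.

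This is only a model, not a substructure, so we must still argue that $\mathcal{M}$ embeds into it. To guarantee this, we expect to need an extensionality-type consequence of the axioms, namely that $a\ne b$ implies $a[\dif{}(a,b)]\neq b[\dif{}(a,b)]$. We would derive it from $\dif{=}$ with $R$ taken as equality, assuming equality is among the relations $R_1,\ldots,R_k$, as in the motivating setting. With it, distinct arrays have distinct read functions, the quotient map is the identity, and $\mathcal{M}$ is (isomorphic to) a functional model. I expect this to be the main obstacle: it requires fixing which relations are available, or else sharpening the statement to ``embeds up to the extensional quotient''.

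\textbf{Step 2: from functional to standard.} Let $\mathcal{M}$ be functional. Define $\mathcal{N}$ with the same $\is$ and $\es$ (and the same $R$ on elements), and with $\as^{\mathcal{N}}$ the set of \emph{all} functions $\is\to\es$. Interpret read, write and $\map{R}$ in the standard way. For $\dif{R}^j$ on tuples $\overline{f}$ of arrays of $\mathcal{N}$:
\begin{itemize}
\item if $\overline{f}$ consists of arrays of $\mathcal{M}$, keep the value it has in $\mathcal{M}$;
\item otherwise, enumerate the indices where $\lnot R(\overline{f}[i])$ holds (a well-ordering of $\is$, via the axiom of choice, suffices). Let $\dif{R}^1$ be the first such index, or an arbitrary index if there is none. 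Let $\dif{R}^j$ be the $j$-th such index, or equal to $\dif{R}^{j-1}$ once the enumeration is exhausted.
\end{itemize}
A short check shows that axioms (\ref{ax:diff}) and (\ref{ax:iterated}) hold for the new tuples. For old tuples they hold because $\mathcal{M}\models\tc$ and reads and writes agree.

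The inclusion $\mathcal{M}\subseteq\mathcal{N}$ is then an embedding. It is injective and the identity on indices and elements. It preserves read, write and $\dif{R}^j$ by construction. It preserves and reflects $\map{R}$ because both sides are equivalent to $\forall i.\,R(\overline{a}[i])$. Composing with Step 1 gives the proposition.
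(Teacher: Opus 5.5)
Your construction matches the paper's. The paper maps $a$ to $\lambda i.\,a[i]^{\mathcal{M}}$ into the model of all functions over the same $\is$ and $\es$. It keeps the $\mathcal{M}$-values of $\dif{R}^j$ on tuples coming from $\mathcal{M}$, uses the standard semantics on all other tuples, and checks $\map{R}$ via (3.3) and (3.4). Your Step 1/Step 2 split just factors this map into an isomorphism onto its image followed by an inclusion, and your well-ordering enumeration spells out what ``standard semantics'' means for new tuples. The gap is the point you flag, injectivity, and your proposed fix does not work. Instance (3.4) with $R$ being equality only gives $\map{=}(a,b)\lor a[\dif{=}^1(a,b)]\neq b[\dif{=}^1(a,b)]$. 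Nothing in (3.1)--(3.5) connects $\map{=}(a,b)$ with the equality $a=b$ of array objects, so $a\neq b\rightarrow a[\operatorname{diff}(a,b)]\neq b[\operatorname{diff}(a,b)]$ is not derivable, whichever relations are available.

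In fact the statement is false for (3.1)--(3.5) alone. Take a standard model and adjoin a copy $a'$ of some array $a$, setting $a'[i]:=a[i]$ and $\wrte{a'}{i}{e}:=\wrte{a}{i}{e}$. Copy every $\map{R}$ and $\dif{R}^j$ value on a tuple containing $a'$ from the tuple with each $a'$ replaced by $a$. Every axiom instance is invariant under this replacement, so this is still a model. But any embedding $\mu$ into a standard model yields $\wrte{\mu(a)}{\mu(i)}{\mu(e)}=\wrte{\mu(a')}{\mu(i)}{\mu(e)}$ and $\mu(a)(\mu(i))=\mu(a[i])=\mu(a')(\mu(i))$. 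Hence $\mu(a)=\mu(a')$, contradicting injectivity.

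Your quotient fallback does not help either. The quotient map is not injective, and the proposition is used through Corollary~\ref{cor:functional} in the proof of Theorem~\ref{thm:tc-amalgam}, which needs genuine isomorphisms onto functional models. The missing ingredient is to take extensionality as an axiom of $\tc$, e.g.\ $\forall a,b.\ a=b\lor a[\operatorname{diff}(a,b)]\neq b[\operatorname{diff}(a,b)]$, which keeps the theory universal. This is what the paper does: its proof appeals to ``the extensionality axiom'', and Section~\ref{section:unifinterp} says $\tc$ subsumes the extensional theory of arrays with diff, even though the displayed list (3.1)--(3.5) omits it. With that axiom, Step 1 reduces to the remark that $a\mapsto\lambda i.\,a[i]$ is injective, with no choice of representatives. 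Step 2 then goes through as you wrote it, provided you also interpret $\operatorname{diff}$ on new tuples the same way as $\dif{=}^1$.
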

\begin{proof}
Let $\mathcal{M}$ be an arbitrary model of $\tc$. We  build a standard model $\operatorname{std}(\mathcal{M})$ with  
\[
\operatorname{INDEX}^{\text{std}(\mathcal{M})}=\operatorname{INDEX}^{\mathcal{M}} \text{ and } \operatorname{ELEM}^{\text {std}(\mathcal{M})}=\operatorname{ELEM}^{\mathcal{M}}
\]

We embed $\mathcal{M}$ into $\operatorname{std}(\mathcal{M})$ mapping every $a \in \operatorname{ARRAY}^{\mathcal{M}}$ to the function $\mu(a)$ such that 
\[
\mu(a)(i) := a[i]^{\mathcal{M}}
\]

We define the element sort relations in $\operatorname{std}(\mathcal{M})$ as in $\mathcal{M}$.


We also define diff$_R^{\operatorname{std}(\mathcal{M})}(a_1,\ldots,a_n) := $ diff$_R^{\mathcal{M}}(\mu^{-1}(a_1),\ldots,\mu^{-1}(a_n))$ if all inverses are defined and otherwise according to the standard semantics. We proceed analogously with the iterated diff functions.

$\mu$ is injective since $\mu(a) = \mu(b)$ implies that $a[i]^{\mathcal{M}} = b[i]^{\mathcal{M}}$ for every index $i$. Thus, from the extensionality axiom it follows that $a = b$. 

Next, let us show that $\map{R}$ relations are preserved by $\mu$. 

If $\overline{a} \in \map{R}^{\mathcal{M}}$ and $\mu(\overline{a}) \notin \map{R}^{\text{std}(\mathcal{M})}$ then $\lnot R(\mu(\overline{a})(\diff{R}{}{\overline{a}}))$ which holds if and only if $\lnot R(\overline{a}[\diff{R}{}{\overline{a}}])$ contradicting the axiomatisation of $\map{R}$. 

Conversely, if $\overline{a} \notin \map{R}^{\mathcal{M}}$ and $\mu(\overline{a}) \in \map{R}^{\text{std}(\mathcal{M})}$ then on the one hand we have that $\lnot R(\overline{a}[\diff{R}{}{\overline{a}}])$ and on the other hand we have that $R(\mu(\overline{a})(\diff{R}{}{\overline{a}}))$, which holds if and only if $R(\overline{a}[\diff{R}{}{\overline{a}}])$. This leads to a contradiction.

Finally, $\mu(\diff{R}{\mathcal{M}}{a_1,\ldots,a_n}) = \diff{R}{\mathcal{M}}{a_1,\ldots,a_n} = \diff{R}{std(\mathcal{M})}{\mu(a_1),\ldots,\mu(a_n)}$ follows from the definition. The proof is analogous for iterated diffs.
\end{proof}

\begin{cor} \label{cor:functional}
Every model of $\tc$ is isomorphic to a functional one.
\end{cor}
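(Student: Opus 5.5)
The plan is to derive the corollary from Proposition~\ref{prop:embedding} together with the factorisation of embeddings into an isomorphism followed by an inclusion. Let $\mathcal{M}$ be a model of $\tc$ and let $\mu:\mathcal{M}\to\operatorname{std}(\mathcal{M})$ be the embedding built there; recall that $\mu$ is the identity on the sorts INDEX and ELEM and sends each $a\in\as^{\mathcal{M}}$ to the function $i\mapsto a[i]^{\mathcal{M}}$. Write $\mu=i\circ g$, where $g:\mathcal{M}\to \mu(\mathcal{M})$ is the isomorphism onto the image and $i$ is the inclusion of $\mu(\mathcal{M})$ into $\operatorname{std}(\mathcal{M})$. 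The image $\mu(\mathcal{M})$ is a substructure of $\operatorname{std}(\mathcal{M})$, since it is closed under all function symbols because $\mu$ is a homomorphism. The candidate functional model is therefore $\mu(\mathcal{M})$.

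Next I would verify that $\mu(\mathcal{M})$ is functional. Its index and element sorts coincide with those of $\operatorname{std}(\mathcal{M})$, which are $\is^{\mathcal{M}}$ and $\es^{\mathcal{M}}$. Its array sort is $\mu(\as^{\mathcal{M}})$, a subset of the set of all functions $\is^{\mathcal{M}}\to\es^{\mathcal{M}}$. Because $\mu(\mathcal{M})$ is a substructure, every symbol is interpreted by restricting its interpretation in $\operatorname{std}(\mathcal{M})$. Thus read is function application, write is pointwise update, and $\map{R}$ is the pointwise relation, exactly as required. For the diff symbols, I would check that they have their standard meaning: each $\dif{R}^j$ applied to arrays in the image returns an index where $\lnot R$ holds, or repeats the previous diff when no further such index exists. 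I would obtain this from the universal Axioms~\ref{ax:diff} and~\ref{ax:iterated}, which hold in $\mu(\mathcal{M})$ by Corollary~\ref{cor:preserv}, since $\mu(\mathcal{M})$ is isomorphic to the model $\mathcal{M}$.

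That $\mu(\mathcal{M})$ is a model of $\tc$ follows directly, either because it is isomorphic to $\mathcal{M}$ or because it is a substructure of the standard model and the theory is universal (Corollary~\ref{cor:preserv}).

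The main obstacle is the diff clause. The notion of ``standard meaning'' for $\dif{R}^j$ is not uniquely determined, since the definition allows an arbitrary choice of index. So the step to be careful about is to read ``standard'' as meaning that the axioms characterising $\dif{R}^j$ hold pointwise when arrays are treated as functions, rather than that $\dif{R}^j$ picks one particular canonical index. Under that reading, the axioms transported along the isomorphism $g$ give exactly the required property.
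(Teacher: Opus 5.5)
Your proposal is correct and follows the paper's proof, which takes the image of $\mathcal{M}$ under the embedding $\mu$ of Proposition~\ref{prop:embedding} as the isomorphic functional model. Your extra check, that the image interprets read, write, $\map{R}$ and the diff symbols in the standard (axiom-satisfying) sense, spells out what the paper leaves implicit.
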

\begin{proof}
Every model is isomorphic to its image via the embedding of Proposition~\ref{prop:embedding}.
\end{proof}

\begin{cor}
To test the satisfiability of $\tc$ constraints, we can consider only standard models.
\end{cor}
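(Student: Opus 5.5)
The plan is to reduce satisfiability to the class of standard models in two steps. First, a satisfiable constraint has a model in which it holds. Second, since the constraint is quantifier-free, that model can be traded for a standard one. Concretely, let $\varphi$ be a quantifier-free $\tc$-constraint and suppose $(\mathcal{M}, \mathrm{a}) \models \varphi$ for some model $\mathcal{M}$ of $\tc$ and some assignment $\mathrm{a}$ to the free variables of $\varphi$. By Proposition~\ref{prop:embedding} there is an embedding $\mu : \mathcal{M} \to \operatorname{std}(\mathcal{M})$ into a standard model. I would take $\mu \circ \mathrm{a}$ as the assignment in $\operatorname{std}(\mathcal{M})$ and argue that $(\operatorname{std}(\mathcal{M}), \mu \circ \mathrm{a}) \models \varphi$. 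The converse direction is trivial, because a standard model is in particular a model of $\tc$. Hence $\varphi$ is $\tc$-satisfiable if and only if it is satisfiable in some standard model.

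The key step is that embeddings preserve and reflect the truth of quantifier-free formulas. This follows from the proposition on preservation of existential and universal formulas, since a quantifier-free formula is both existential and universal (take the empty quantifier prefix). It can also be checked directly by induction on the formula. For terms, condition (2) of an embedding, applied to $\operatorname{write}$, read, $K$, $\operatorname{map}_f$ and the $\operatorname{diff}_R^j$ symbols, gives $\mu(t^{\mathcal{M}}[\mathrm{a}]) = t^{\operatorname{std}(\mathcal{M})}[\mu \circ \mathrm{a}]$. For atoms, injectivity handles equalities and condition (3) handles $\operatorname{map}_R$ and the element relations. The Boolean connectives then follow immediately.

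The only point requiring care is that Proposition~\ref{prop:embedding} really supplies a $\Sigma$-embedding for the full signature, including every iterated $\operatorname{diff}_R^j$. This is where the definition of $\operatorname{diff}^{\operatorname{std}(\mathcal{M})}$ on the image of $\mu$ is used. I would simply invoke the proposition as stated rather than reprove it. I would also note that $\operatorname{std}(\mathcal{M})$ satisfies the axioms (3.1)--(3.5), so that it is indeed a model of $\tc$. I expect no real obstacle beyond this bookkeeping.
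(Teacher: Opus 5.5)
Your proposal is correct and follows the paper's route. You embed the satisfying model into $\operatorname{std}(\mathcal{M})$ via Proposition~\ref{prop:embedding} and use that embeddings preserve existential, in particular quantifier-free, formulas upward; the converse holds because standard models are models of $\tc$. Your explicit use of the assignment $\mu \circ \mathrm{a}$ spells out what the paper's one-line appeal to preservation under superstructures leaves implicit.
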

\begin{proof}
This follows from the fact that embeddings preserve existential sentences (see Proposition~\ref{cor:preserv}). 
\end{proof}

\begin{defi}
Two elements $a, b$ of $\operatorname{ARRAY}^{\mathcal{M}}$ in a model $\mathcal{M}$ of $\tc$ are cardinality dependent, written $\mathcal{M} \models|a-b|<\omega$, if and only if $\left\{i \in \operatorname{INDEX}^{\mathcal{M}} \mid \mathcal{M} \models a[i] \neq\right. b[i] \}$ is finite. 
\end{defi}

\begin{prop}
Cardinality dependence is an equivalence relation.
\end{prop}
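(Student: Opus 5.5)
We check the three defining properties directly from the definition of cardinality dependence. For arrays $a,b$ in a model $\mathcal{M}$ of $\tc$, write
\[
D(a,b) := \{ i \in \operatorname{INDEX}^{\mathcal{M}} \mid \mathcal{M} \models a[i] \neq b[i] \},
\]
so that $\mathcal{M} \models |a-b|<\omega$ exactly when $D(a,b)$ is finite. The argument is purely set-theoretic. It uses only the facts that equality on $\operatorname{ELEM}^{\mathcal{M}}$ is an equivalence relation and that $\cdot[\cdot]^{\mathcal{M}}$ is a function. No array axioms are needed; in particular extensionality (Proposition~\ref{prop:embedding}) plays no role.

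\emph{Reflexivity.} Since $a[i] = a[i]$ holds for every index $i$, we have $D(a,a) = \emptyset$, which is finite.

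\emph{Symmetry.} The disequality $a[i] \neq b[i]$ holds if and only if $b[i] \neq a[i]$ holds. Hence $D(a,b) = D(b,a)$, and one set is finite exactly when the other is.

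\emph{Transitivity.} This is the only step with any content. The key claim is the inclusion
\[
D(a,c) \subseteq D(a,b) \cup D(b,c).
\]
To prove it, take an index $i \notin D(a,b) \cup D(b,c)$. Then $a[i] = b[i]$ and $b[i] = c[i]$, so $a[i] = c[i]$ by transitivity of equality in $\operatorname{ELEM}^{\mathcal{M}}$, and therefore $i \notin D(a,c)$. Now suppose $|a-b|<\omega$ and $|b-c|<\omega$. Then $D(a,b) \cup D(b,c)$ is a union of two finite sets, hence finite. Its subset $D(a,c)$ is then finite as well, so $|a-c|<\omega$.

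No real obstacle arises. The only point needing care is that the relation is defined semantically, inside a fixed model $\mathcal{M}$, and not as a first-order formula. All three verifications are therefore carried out on subsets of $\operatorname{INDEX}^{\mathcal{M}}$, and they are valid in arbitrary models, not only in functional ones.
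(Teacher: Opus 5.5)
Your proof is correct and takes the same approach as the paper, whose entire proof is the remark that reflexivity, symmetry and transitivity are easy to establish. You have filled in the omitted details, and the inclusion $D(a,c) \subseteq D(a,b) \cup D(b,c)$ is exactly what the transitivity step requires.
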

\begin{proof}
Reflexivity, symmetry and transitivity are easy to establish.
\end{proof}

The following lemma shows that cardinality dependence is preserved under embeddings.

\begin{lem} \label{lem:cardinality}
Let $\mathcal{N}, \mathcal{M}$ be models of $\tc$ and let $\mu:\mathcal{N} \to \mathcal{M}$ be an embedding that restricts to an inclusion $\is^{\mathcal{N}} \subseteq \is^{\mathcal{M}}$, $\es^{\mathcal{N}} \subseteq \es^{\mathcal{M}}$. For every $a, b \in \as^{\mathcal{M}}$, we have that
$$
\mathcal{N} \models|a-b|<\omega \quad \text { iff } \quad \mathcal{M} \models|\mu(a)-\mu(b)|<\omega .
$$
\end{lem}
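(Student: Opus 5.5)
The proof compares the two difference sets
\[
D_{\mathcal{N}} = \{ i \in \is^{\mathcal{N}} \mid \mathcal{N} \models a[i] \neq b[i] \}, \qquad
D_{\mathcal{M}} = \{ i \in \is^{\mathcal{M}} \mid \mathcal{M} \models \mu(a)[i] \neq \mu(b)[i] \},
\]
for $a,b \in \as^{\mathcal{N}}$. (The statement should read $a,b \in \as^{\mathcal{N}}$, since $\mu$ is applied to them.) The goal is to show that $D_{\mathcal{N}}$ is finite iff $D_{\mathcal{M}}$ is.

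\textbf{Right to left.} The plan is to show $D_{\mathcal{N}} \subseteq D_{\mathcal{M}}$, from which this direction is immediate. Take $i \in D_{\mathcal{N}}$. Since $\mu$ is a homomorphism, $\mu(a[i]^{\mathcal{N}}) = \mu(a)[\mu(i)]^{\mathcal{M}}$, and $\mu(i)=i$ on indices. Since $\mu$ is injective on elements, $a[i]^{\mathcal{N}} \neq b[i]^{\mathcal{N}}$ gives $\mu(a)[i]^{\mathcal{M}} \neq \mu(b)[i]^{\mathcal{M}}$. Hence $i \in D_{\mathcal{M}}$.

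\textbf{Left to right.} This is the real content, because $\mathcal{M}$ may contain indices outside $\is^{\mathcal{N}}$. It is not evident a priori why $\mu(a)$ and $\mu(b)$ could not disagree there. The tool is the family of iterated diff operators, which are preserved by embeddings because they are function symbols of the signature. I will use them for the binary relation $R(x,y) :\equiv x = y$ on elements. I take this to be among the relations for which $\map{R}$ and $\diff{R}{j}{\cdot}$ are available; this is the instance that encodes extensionality, already used in Proposition~\ref{prop:embedding}. Confirming that this instance is available, or else working around its absence, is the main obstacle.

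Suppose $|D_{\mathcal{N}}| = n$. I first argue, by induction on $j$ using Axioms~\ref{ax:diff} and~\ref{ax:iterated} in $\mathcal{N}$, that the indices $d_j := \diff{=}{j}{a,b}^{\mathcal{N}}$ have two properties. For $j \le n$ they are pairwise distinct elements of $D_{\mathcal{N}}$. Moreover $d_{n+1} = d_n$, since the first disjunct of Axiom~\ref{ax:iterated} would require a new differing index and none exists. The case $n=0$ is handled by Axiom~\ref{ax:diff} and extensionality, giving $a = b$ and hence $\mu(a)=\mu(b)$.

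Next I transfer this to $\mathcal{M}$. Because $\mu$ commutes with the diff symbols and fixes indices, we get $\diff{=}{j}{\mu(a),\mu(b)}^{\mathcal{M}} = d_j$ for all $j$. In particular the equality $\diff{=}{n}{\mu(a),\mu(b)} = \diff{=}{n+1}{\mu(a),\mu(b)}$ holds in $\mathcal{M}$. Now instantiate Axiom~\ref{ax:iterated} in $\mathcal{M}$ with $j = n+1$, which is legitimate since $\mathcal{M}$ is a model of $\tc$. The first disjunct is excluded by that equality, so the second disjunct holds. It states that every index $i \in \is^{\mathcal{M}}$ with $i \notin \{d_1,\ldots,d_n\}$ satisfies $\mu(a)[i] = \mu(b)[i]$. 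Therefore $D_{\mathcal{M}} \subseteq \{d_1,\ldots,d_n\}$, and in fact $D_{\mathcal{M}} = D_{\mathcal{N}}$, which is finite.
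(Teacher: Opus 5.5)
\textbf{Where your argument agrees with the paper.} Your right-to-left direction is the paper's converse argument: $D_{\mathcal{N}} \subseteq D_{\mathcal{M}}$ by the homomorphism property and injectivity. You are also right that $a,b$ should range over $\as^{\mathcal{N}}$.

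\textbf{The gap in your left-to-right direction.} The stabilisation step does not follow from the axioms as stated. You assert that $d_1,\ldots,d_n$ are pairwise distinct and that $d_{n+1}=d_n$, because the first disjunct of Axiom (3.5) ``would require a new differing index''. As written, that disjunct only requires $\diff{=}{j-1}{a,b} \neq \diff{=}{j}{a,b}$ together with $a[\diff{=}{j}{a,b}] \neq b[\diff{=}{j}{a,b}]$. It does not require $\diff{=}{j}{a,b}$ to avoid the earlier values $\diff{=}{1}{a,b},\ldots,\diff{=}{j-2}{a,b}$.

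So take $D_{\mathcal{N}}=\{x,y\}$. The sequence $d_1,d_2,d_3,\ldots = x,y,x,y,\ldots$ satisfies Axiom (3.4) and every instance of (3.5) in $\mathcal{N}$, and there is no $m$ with $d_{m+1}=d_m$. Since $\mu$ fixes indices, the same alternating sequence also satisfies the first disjunct of (3.5) in $\mathcal{M}$ for every $j$. In that situation the diff and map axioms on the pair $(\mu(a),\mu(b))$ say nothing about the indices in $\is^{\mathcal{M}}\setminus\is^{\mathcal{N}}$. The finiteness you need cannot be extracted from the iterated diffs.

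Your argument works only under the ``fresh index'' semantics that the paper describes informally for standard models. To use it, you would have to strengthen the first disjunct to $\bigwedge_{l<j} \diff{R}{l}{\overline{a}} \neq \diff{R}{j}{\overline{a}}$. As you noted, it also needs $\dif{=}^j$ in the signature for every $j$.

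\textbf{How the paper avoids this, and how to repair your proof.} The paper uses no diff operator at all.
\begin{itemize}
\item If $D_{\mathcal{N}}=\{i_1,\ldots,i_n\}$, put $I=i_1,\ldots,i_n$ and $E=a[i_1],\ldots,a[i_n]$. Then $a$ and $\wrte{b}{I}{E}$ agree at every index of $\mathcal{N}$, so extensionality gives $a=\wrte{b}{I}{E}$ in $\mathcal{N}$.
\item Since $\mu$ is a homomorphism fixing indices and elements, $\mu(a)=\wrte{\mu(b)}{I}{E}$ holds in $\mathcal{M}$.
\item By Axiom (3.2), $\mu(a)$ and $\mu(b)$ can then differ only on $I$.
\end{itemize}

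If you prefer to keep your $\map{=}$ framework, the same idea works with the first diff alone. Let $c:=\wrte{b}{I}{E}$. Axiom (3.4) yields $\map{=}(a,c)$ in $\mathcal{N}$, and the embedding preserves this predicate. Axiom (3.3) in $\mathcal{M}$ then gives $\mu(a)[i]=\mu(c)[i]=\mu(b)[i]$ for every $i \notin I$.
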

\begin{proof}
If $\mathcal{N} \models|a-b|<\omega$ then $\mathcal{N} \models a=\wrte{b}{I}{E}$, where $I \equiv i_{1}, \ldots, i_{n}$ is a list of terms of sort INDEX, $E \equiv e_{1}, \ldots, e_{n}$ is a list of terms of sort ELEM, and $\wrte{b}{I}{E}$ abbreviates the term $\wrte{\cdots \wrte{a}{i_{1}}{e_{1}} \cdots}{i_{n}}{e_{n}}$. Thus, also $\mathcal{N} \models \mu(a)= \wrte{\mu(b)}{I}{E})$. Thus, we have that $\mathcal{N} \models |\mu(a)-\mu(b)| < \omega$.

Conversely, if $\mathcal{N} \not \models|a-b|<\omega$ then there are infinitely many $i \in \operatorname{INDEX}^{\mathcal{N}}$ such that $a^{\mathcal{N}}[i]^{\mathcal{N}} \neq b^{\mathcal{N}}[i]^{\mathcal{N}}$. Since $\mu$ is injective, there are also infinitely many $i \in \operatorname{INDEX}^{\mathcal{M}}$ such that $\mu(a)^{\mathcal{M}}[i]^{\mathcal{M}} \neq \mu(b)^{\mathcal{M}}[i]^{\mathcal{M}}$, i.e. $\mathcal{M} \not \models|\mu(a)-\mu(b)|<\omega$.
\end{proof}

\begin{lem} \label{lem:extension}
Let $\mathcal{N}, \mathcal{M}$ be models of $\tc$ and let $\mu:\mathcal{N} \to \mathcal{M}$ be an embedding that restricts to an inclusion $\is^{\mathcal{N}} \subseteq \is^{\mathcal{M}}$, $\es^{\mathcal{N}} \subseteq \es^{\mathcal{M}}$. For every $a \in \as^{\mathcal{M}}$ and $i \in \is^{\mathcal{N}}$, we have that $\mu(a)[i]^{\mathcal{M}} = a[i]^{\mathcal{N}}$.
\end{lem}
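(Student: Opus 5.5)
The lemma as typed has $a \in \as^{\mathcal{M}}$, but it only makes sense for $a \in \as^{\mathcal{N}}$: otherwise $\mu(a)$ and $a[i]^{\mathcal{N}}$ are undefined. I will read it that way, as in the preceding Lemma~\ref{lem:cardinality}. The plan is to use only the homomorphism clause in the definition of embedding, applied to the read symbol $\_[\_]$, together with the hypothesis that $\mu$ restricts to inclusions on the index and element sorts. No axiom of $\tc$ and no use of standard or functional models is needed.

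\textbf{Step 1.} Fix $a \in \as^{\mathcal{N}}$ and $i \in \is^{\mathcal{N}}$. The read operation is a binary function symbol of sort $\as \times \is \to \es$. Condition (2) of the definition of embedding, applied to this symbol and the pair $(a,i)$, gives
\[
\mu(a)[\mu(i)]^{\mathcal{M}} = \mu\bigl(a[i]^{\mathcal{N}}\bigr).
\]
\textbf{Step 2.} Embeddings are sort-preserving, so the restriction of $\mu$ to $\is^{\mathcal{N}}$ is the inclusion into $\is^{\mathcal{M}}$. Hence $\mu(i) = i$.

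\textbf{Step 3.} Likewise $a[i]^{\mathcal{N}} \in \es^{\mathcal{N}}$, and $\mu$ restricts to the inclusion on $\es^{\mathcal{N}}$. Hence $\mu(a[i]^{\mathcal{N}}) = a[i]^{\mathcal{N}}$. Substituting Steps 2 and 3 into Step 1 yields $\mu(a)[i]^{\mathcal{M}} = a[i]^{\mathcal{N}}$.

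There is no real obstacle here. The only care needed is the many-sorted bookkeeping: $\mu$ acts nontrivially only on the array sort, and the inclusion hypotheses are exactly what allows the two applications of $\mu$ outside that sort to be dropped.
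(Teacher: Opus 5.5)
Your proof is correct and is the paper's argument: the homomorphism clause for the read symbol plus the inclusion on the element sort, with the index inclusion ($\mu(i)=i$) made explicit where the paper leaves it implicit. Reading $a$ as an element of $\as^{\mathcal{N}}$ is also the right fix for the typo in the statement.
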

\begin{proof}
We have that $\mu(a[i]^{\mathcal{N}}) = a[i]^{\mathcal{N}}$ and that $\mu(a[i]^{\mathcal{N}}) = \mu(a)[i]^{\mathcal{M}}$ where the first equality follows from the fact that $\mu$ acts as the inclusion on the element sort and the second from the fact that $\mu$ is an embedding.
\end{proof}

Using the observations above, we are ready to prove that $\tc$ has the amalgamation property.

\begin{thm} \label{thm:tc-amalgam}
The theory $\tc$ has the amalgamation property.
\end{thm}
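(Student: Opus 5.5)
Following \cite{bruttomesso_quantifier-free_2012}, we would amalgamate the index and element sorts first and then build the arrays of the amalgam as functions. By Proposition~\ref{prop:inclusion} and Corollary~\ref{cor:functional}, we may assume that $\mathcal{N}\subseteq\mathcal{M}_0$ and $\mathcal{N}\subseteq\mathcal{M}_1$ are inclusions of functional models with $\is^{\mathcal{M}_0}\cap\is^{\mathcal{M}_1}=\is^{\mathcal{N}}$ and $\es^{\mathcal{M}_0}\cap\es^{\mathcal{M}_1}=\es^{\mathcal{N}}$.

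The base sorts of the amalgam $\mathcal{M}$ are $\is^{\mathcal{M}}=\is^{\mathcal{M}_0}\cup\is^{\mathcal{M}_1}$ and $\es^{\mathcal{M}}=\es^{\mathcal{M}_0}\cup\es^{\mathcal{M}_1}$. Each element relation $R$ is interpreted on $\es^{\mathcal{M}}$ as the union of $R^{\mathcal{M}_0}$ and $R^{\mathcal{M}_1}$. Since the signature is purely relational and uninterpreted, this is a legitimate amalgam of the element parts, and both inclusions reflect $R$ because the two agree on $\es^{\mathcal{N}}$. We take $\mathcal{M}$ to be a standard model over these sorts, so $\as^{\mathcal{M}}$ is the set of all functions $\is^{\mathcal{M}}\to\es^{\mathcal{M}}$.

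The embeddings $\nu_k:\mathcal{M}_k\to\mathcal{M}$ are defined as follows. For $a\in\as^{\mathcal{M}_k}$, $\nu_k(a)$ agrees with $a$ on $\is^{\mathcal{M}_k}$. On a foreign index $i\in\is^{\mathcal{M}_{1-k}}\setminus\is^{\mathcal{N}}$ the value must be chosen coherently. Following \cite{bruttomesso_quantifier-free_2012}, we fix for each cardinality-dependence class of $\as^{\mathcal{M}_k}$ that meets $\as^{\mathcal{N}}$ a representative $c\in\as^{\mathcal{N}}$. Here Lemma~\ref{lem:cardinality} guarantees that the classes of $\mathcal{N}$ are exactly the traces of those of $\mathcal{M}_k$. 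We then set $\nu_k(a)(i):=\nu_{1-k}(c)(i)$, that is $c(i)$ read in $\mathcal{M}_{1-k}$. For classes not meeting $\mathcal{N}$ we use a fixed default element.

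Because $a$ and $c$ differ only at finitely many indices of $\is^{\mathcal{M}_k}$, writes are preserved, and $\nu_0,\nu_1$ agree on $\as^{\mathcal{N}}$ by Lemma~\ref{lem:extension}. Injectivity follows from extensionality on the $\is^{\mathcal{M}_k}$ part. This is the standard array part of the argument, and we expect it to go through as in the extensional case.

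The main obstacle is preserving and reflecting $\map{R}$ and keeping the iterated $\diff{R}{j}{\cdot}$ values unchanged. In $\mathcal{M}$ these are interpreted standardly, and then $\nu_k$ must commute with them. This fails if the default filling on foreign indices creates new positions where $\lnot R$ holds, or destroys the "no more $\lnot R$ positions" information. Consider $a\in\as^{\mathcal{M}_k}$ whose $\lnot R$-positions are the finitely many iterated diffs $d_1,\ldots,d_m$, with $\diff{R}{m}{}=\diff{R}{m+1}{}$. Then no foreign index may carry a $\lnot R$ value. We would try to enforce this by choosing the fill from the representative $c$ so that the $\lnot R$-pattern is inherited. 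If $\map{R}(c)$ holds in $\mathcal{N}$, it holds in $\mathcal{M}_{1-k}$, so the fill satisfies $R$. If instead $c$ has $\lnot R$ positions, they are the diff indices of $c$, which lie in $\is^{\mathcal{N}}$ and are shared.

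We would verify the iterated diff axioms (3.5) by cases. Either the diff sequence of $a$ terminates, giving finitely many $\lnot R$ positions, all in $\is^{\mathcal{M}_k}$, and the fill adds none. Or the sequence never stabilises, giving infinitely many $\lnot R$ positions, and then any choice of new distinct diffs consistent with the $\mathcal{M}_k$ values works. We then define $\diff{R}{j}{\cdot}$ on tuples in the image of $\nu_k$ by transport, and standardly elsewhere. The delicate point is tuples of arrays $\overline{a}$ that mix different dependence classes. The case where the representative class has no $\mathcal{N}$ member and the default element fails $R$ may require choosing defaults satisfying every needed $R$. If no such element exists, we would add a fresh element $\ast$ to $\es^{\mathcal{M}}$ on which all relations $R$ hold, which is allowed since the element theory is uninterpreted.
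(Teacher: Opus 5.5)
\textbf{There is a genuine gap} in how you interpret $\dif{R}^j$ in $\mathcal{M}$. You define it ``by transport'' on the image of $\nu_k$, but a tuple may lie in the image of both $\nu_0$ and $\nu_1$. Unless the two transported values agree, one of $\nu_0,\nu_1$ fails to preserve $\dif{R}^j$. You never address this, and for your construction as stated it fails. Here is a counterexample:
\begin{itemize}
\item Let $\mathcal{M}_0,\mathcal{M}_1$ have the same index and element sets as $\mathcal{N}$, with $\is^{\mathcal{N}}$ infinite and $\as^{\mathcal{N}}$, say, the finite variants of a constant array.
\item Let both contain, besides $\as^{\mathcal{N}}$, the finite variants of one array $f$ that is not cardinality dependent on any array of $\mathcal{N}$ and has $\lnot R$ exactly at $p\neq q$.
\item Set $\dif{R}^1(f)=p$ in $\mathcal{M}_0$ and $\dif{R}^1(f)=q$ in $\mathcal{M}_1$.
\end{itemize}
Your $\mathcal{M}$ has $\is^{\mathcal{M}}=\is^{\mathcal{N}}$, so there are no foreign indices and $\nu_0(f)=\nu_1(f)$ is the function $f$. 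No value of $\dif{R}^1$ at $f$ makes both $\nu_k$ homomorphisms.

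What is missing is the paper's Claim~(\ref{eq:claim}): if $\nu_0(\overline{a_0})=\nu_1(\overline{a_1})$, then $\overline{a_0}=\mu_0(\overline{d})$ and $\overline{a_1}=\mu_1(\overline{d})$ for some $\overline{d}$ in $\mathcal{N}$. Then both transports equal $\dif{R}^{j,\mathcal{N}}(\overline{d})$; the same claim also gives strong amalgamation, Corollary~\ref{cor:tc-strong}. To make the claim hold, the paper first forces new indices $j_0\in\is^{\mathcal{M}_0}\setminus\is^{\mathcal{N}}$ and $j_1\in\is^{\mathcal{M}_1}\setminus\is^{\mathcal{N}}$. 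In addition, the filler used on $\is^{\mathcal{M}_{1-k}}\setminus\is^{\mathcal{N}}$ for classes missing $\mathcal{N}$ must lie outside $\es^{\mathcal{M}_{1-k}}$; your $\ast$ does. The claim then follows in four steps:
\begin{enumerate}
\item A coincidence cannot involve the filler: compare the values at $j_{1-k}$.
\item So both sides use representatives, and these can be taken to be a single $\overline{c}$, since the two representatives differ at only finitely many indices of $\is^{\mathcal{N}}$.
\item Hence $\nu_0(\overline{a_0})$ agrees with $\nu_0\mu_0(\overline{c})=\nu_1\mu_1(\overline{c})$ outside finitely many indices of $\is^{\mathcal{N}}$, where its values lie in $\es^{\mathcal{M}_0}\cap\es^{\mathcal{M}_1}=\es^{\mathcal{N}}$.
\item So it equals $\nu_0\mu_0(\overline{d})$ with $\overline{d}=\wrte{\overline{c}}{I}{E}$, and injectivity of $\nu_0,\nu_1$ finishes.
\end{enumerate}

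Your treatment of $\map{R}$ is in fact more careful than the paper's, which sets $\map{R}^{\mathcal{M}}$ by transport and to true elsewhere without rechecking the axioms. You interpret it standardly, use the stabilised iterated diffs of the representative to show the filling creates no new $\lnot R$ positions, and add an all-$R$ element. Keep this. Note only that it relies on the ``new index'' reading of iterated diffs, since (3.5) as written only forces $\dif{R}^{j-1}\neq\dif{R}^{j}$.
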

\begin{proof}
Take two embeddings $\mu_{0}: \mathcal{N} \longrightarrow \mathcal{M}_{0}$ and $\mu_{1}: \mathcal{N} \longrightarrow \mathcal{M}_{1}$. As observed in Corollary~\ref{cor:functional} and Proposition~\ref{prop:inclusion}, we can suppose that $\mathcal{N}, \mathcal{M}_{0}, \mathcal{M}_{1}$ are functional models such that $\mu_{0}, \mu_{1}$ restricts to an inclusion for the sorts $\is$ and $\es$ (so $\mathcal{N}$ is a functional substructure of $\mathcal{M}_0$ and $\mathcal{M}_1$).  

We may assume that $($INDEX$^{\mathcal{M}_{0}} \backslash$INDEX$^{\mathcal{N}}) \cap ($INDEX$^{\mathcal{M}_{1}}\backslash$INDEX$^{\mathcal{N}})=\emptyset$ since we can reduce the general case to this one by embedding $INDEX^{\mathcal{N}}$ into $INDEX^{\mathcal{N}} \cup (INDEX^{\mathcal{M}_0} \cap INDEX^{\mathcal{M}_1})$ and extending the  embeddings $\mu_0$ and $\mu_1$ as that identity over these sets as shown in Figure~\ref{fig:index-embedding}. The homomorphism properties still hold since the signature of sort INDEX is restricted to equality.
\begin{figure}
\begin{tikzcd}[row sep=large, column sep=large]
& \mathrm{INDEX}^{\mathcal{M}} & \\
\mathrm{INDEX}^{\mathcal{M}_0} \ar[ur, dashed] & & \mathrm{INDEX}^{\mathcal{M}_1} \ar[ul, dashed] \\
& \mathrm{INDEX}^{\mathcal{N}}\cup(\mathrm{INDEX}^{\mathcal{M}_0}\cap\mathrm{INDEX}^{\mathcal{M}_1}) \ar[ul, hook'] \ar[ur, hook] & \\
& \mathrm{INDEX}^{\mathcal{N}} \ar[u, hook] \ar[uul, hook] \ar[uur, hook] &
\end{tikzcd}
\caption{Reduction of the amalgamation property over the index sort to the case where $($INDEX$^{\mathcal{M}_{0}} \backslash$INDEX$^{\mathcal{N}}) \cap ($INDEX$^{\mathcal{M}_{1}}\backslash$INDEX$^{\mathcal{N}})=\emptyset$.}
\label{fig:index-embedding}
\end{figure}

Moreover, we may extend the carrier sets $\mathrm{INDEX}^{\mathcal{M}_0}$ and $\mathrm{INDEX}^{\mathcal{M}_1}$ with new elements $j_i \in INDEX^{\mathcal{M}_i} \setminus INDEX^{\mathcal{N}}$ for $i = 0,1$, again because the INDEX sort uses equality as its only symbol. Thus, the extended models, with the corresponding interpretation of equality, are also models of the theory of equality.

We may assume that $($ELEM$^{\mathcal{M}_{0}} \backslash$ELEM$^{\mathcal{N}}) \cap ($ELEM$^{\mathcal{M}_{1}}\backslash$ELEM$^{\mathcal{N}})=\emptyset$ since we can reduce the general case to this one by embedding $ELEM^\mathcal{N}$ into $ELEM^\mathcal{N} \cup (ELEM^{\mathcal{M}_0} \cap ELEM^{\mathcal{M}_1})$ and extending the  embeddings $\mu_0$ and $\mu_1$ as that identity over these sets. A commutative diagram analogous to that of Figure~\ref{fig:index-embedding} holds for the element sort.  The homomorphism
properties still hold, since the signature of sort ELEM is restricted to uninterpreted functions and relations using a  construction similar to that of the proof of  Proposition~\ref{prop:inclusion}.  

Moreover, we may extend the element carrier sets $\es^{\mathcal{M}_0}$ and $\es^{\mathcal{M}_1}$ with new elements $e_0$ and $e_1$ respectively. The extended structure still satisfies the congruence axioms restricting uninterpreted symbols and relations, as well as the homomorphism conditions, which are unaffected by the newly introduced elements. 

The amalgamated model $\mathcal{M}$ will be defined as the standard model over the index set $\is^{\mathcal{M}_{0}} \cup \is^{\mathcal{M}_{1}}$ and the element set $\es^{\mathcal{M}_0} \cup \es^{\mathcal{M}_1}$. 

Let us define $\nu_{i}: \mathcal{M}_{i} \longrightarrow \mathcal{M}$ for $i=0,1$ in the following manner. For the sorts $\is$ and $\es$ we fix $\nu_i$ to be the inclusion. To define $\nu_{i}$ on $\as^{\mathcal{M}_{i}}$, we extend any $a \in \as^{\mathcal{M}_{i}}$ to indices $k \in \is^{\mathcal{M}_{1-i}} \backslash \is^{\mathcal{N}}$ as follows:
\begin{numcases}{\nu_i(a)(k) :=}
e_{1-i}
& if there is no $c$ such that $\mathcal{M}_i \models |a - \mu_i(c)| < \omega$, \label{I} \\[6pt]
\mu_{1-i}(c)(k)
& for some $c$ satisfying $\mathcal{M}_i \models |a - \mu_i(c)| < \omega$ \label{II}
\end{numcases}

To see that $\nu_i$ is well-defined, we need to check that the choice of $c$ is immaterial in the above definition. In other words, we need to check that for all index $k \in \is^{\mathcal{M}_{1-i}} \setminus \is^{\mathcal{N}}$, $\mu_{1-i}(c)(k) = \mu_{1-i}(c')(k)$.

Since $c$ and $c'$ satisfy that $\mathcal{M}_i \models |a - \mu_i(c)| < \omega$ and $\mathcal{M}_i \models |a - \mu_i(c')| < \omega$ then by transitivity of the cardinality dependence relation, we know that $\mathcal{M}_i \models |\mu_i(c) - \mu_i(c')| < \omega$. Thus, by Lemma~\ref{lem:cardinality}, we have that $\mathcal{N} \models |c-c'| < \omega$. This equivalent to saying that $\mathcal{N} \models c' = \wrte{c}{I}{E}$ where $I \subseteq \is^{\mathcal{N}}$ and $E \subseteq \es^{\mathcal{N}}$. By the properties of the embedding $\mu_{1-i}$, we have that $\mathcal{M}_{1-i} \models \mu_{1-i}(c') = \wrte{\mu_{1-i}(c)}{I}{E}$. Thus, for every $k \in \is^{\mathcal{M}_{1-i}} \setminus \is^{\mathcal{N}}$, $\mu_{1-i}(c')(k) = \mu_{1-i}(c)(k)$, as we wanted to show.

The commutativity condition $\nu_{0} \circ \mu_{0}=\nu_{1} \circ \mu_{1}$ follows trivially for the $\es$ and $\is$ sorts because over these sorts $\mu_0, \mu_1, \nu_0$ and $\nu_1$ act as the inclusion. To show the commutativity property for the sort $\as$ let us proceed applying extensionality. 

If $k \in \is^{\mathcal{N}}$, then 
\begin{align*}
[(\nu_0 \circ \mu_0)(c)](k)
&= [(\nu_0 \circ \mu_0)(c)][k]^{\mathcal{M}} \\
&= \mu_0(c)[k]^{\mathcal{M}} \\
&= c[k]^{\mathcal{N}} \\
&= \mu_1(c)[k]^{\mathcal{M}} \\
&= [(\nu_1 \circ \mu_1)(c)][k]^{\mathcal{M}} \\
&= [(\nu_1 \circ \mu_1)(c)](k)
\end{align*}
where we have used that $\mathcal{M}$ and $\mathcal{N}$ are functional models, the definitions of $\nu_0$ and $\nu_1$ and Lemma~\ref{lem:extension}. 

For indices $k \in \is^{\mathcal{M}_0} \setminus \is^{\mathcal{N}}$,  we have that
\[
[(\nu_0 \circ \mu_0)(c)](k) = [\nu_0(\mu_0(c))] [k]^{\mathcal{M}} = [\mu_0(c)] [k]^{\mathcal{M}_0} = \mu_0(c)(k) = [(\nu_1 \circ \mu_1)(c)](k)
\]
where we have used that $\mathcal{M}$ and $\mathcal{M}_0$ are functional models, Lemma~\ref{lem:extension}. and the definition of $\nu_1$.

The case for $i \in$ INDEX$^{\mathcal{M}_1} \setminus $INDEX$^{\mathcal{N}}$ is analogous.

The injectivity of functions $\nu_0$ and $\nu_1$ follows from the fact that they work by extension of the input argument. Thus, if the image of two inputs is equal, then the inputs themselves must be equal.

To prove that $\nu_0$ and $\nu_1$ are embeddings, we must first complete the definition of the structure $\mathcal{M}$.

In order to define ${\dif{R}^1}^{\mathcal{M}}$ we can simply extend ${\dif{R}^1}^{\mathcal{M}_1} \cup {\dif{R}^1}^{\mathcal{M}_2}$ in such a way that Axiom~\ref{ax:diff} holds. We define ${\dif{R}^1}^{\mathcal{M}}(\overline{a})$ as follows: 
\[
{\dif{R}^1}^{\mathcal{M}}(\overline{a})
=
\begin{cases}
{\dif{R}^1}^{\mathcal{M}_i}(\overline{a}') 
& \text{if for some } i \in \{0,1\},\ \overline{a}=\nu_i(\overline{a}'), \\[6pt]
\text{arbitrary} & \text{if }  \map{R}(\overline{a}) \text{ holds} \\[6pt]
\text{any } i \text{ such that } \lnot R(\overline{a}[i])
& \text{otherwise}  
\end{cases}
\]

The case for iterated diffs is analogous. We define ${\dif{R}^j}^{\mathcal{M}}(\overline{a})$ for $j > 1$ in such a way that Axiom~\ref{ax:iterated} is satisfied:
\[
{\dif{R}^j}^{\mathcal{M}}(\overline{a})
=
\begin{cases}
{\dif{R}^j}^{\mathcal{M}_i}(\overline{a}') 
& \text{if for some } i \in \{0,1\},\ \overline{a}=\nu_i(\overline{a}'), \\[6pt]
{\dif{R}^{j-1}}^{\mathcal{M}}(\overline{a}) & \text{if }  \forall i \notin \{ {\dif{R}^1}^{\mathcal{M}}(\overline{a}), \ldots, {\dif{R}^{j-1}}^{\mathcal{M}}(\overline{a})   \}. R(\overline{a}[i]) \\[6pt]
\text{any } i \text{ such that } \lnot R(\overline{a}[i])
& \text{otherwise}  
\end{cases}
\]

To prove that ${\dif{R}^j}^{\mathcal{M}}$ is well-defined, we show that 
\begin{align} \label{eq:claim}
\text{If } \overline{a}=\nu_{0}\!\left(\overline{a_{0}}\right)
=\nu_{1}\!\left(\overline{a_{1}}\right), 
\text{then there exists } \overline{c}
\text{ such that }
\overline{a_{0}}=\mu_{0}(\overline{c})
\text{ and }
\overline{a_{1}}=\mu_{1}(\overline{c}).
\end{align}

Given this property, we would have that 
\[
{\dif{R}^j}^{\mathcal{M}_i}(\overline{a_i}) = {\dif{R}^j}^{\mathcal{M}_i}(\mu_i(\overline{c})) = \mu_i({\dif{R}^j}^{\mathcal{N}}(\overline{c})) = {\dif{R}^j}^{\mathcal{N}}(\overline{c})
\]
where we have used the fact that $\mu_i$ are homomorphisms that it acts as the inclusion over the index sort. As a consequence, we have that 
\[
{\dif{R}^j}^{\mathcal{M}_0}(\overline{a_0}) = {\dif{R}^j}^{\mathcal{M}_1}(\overline{a_1})
\]
so the first case of the definition of the operator $\operatorname{diff}_R^{\mathcal{M}}$ is not ambiguous. 

\subsubsection*{Proof of Claim~(\ref{eq:claim}).}

Suppose that $\overline{a}=\nu_{0}\left(\overline{a_{0}}\right)=\nu_{1}\left(\overline{a_{1}}\right)$. $\nu_{0}\left(\overline{a_{0}}\right)$ and $\nu_{1}\left(\overline{a_{1}}\right)$
must have been defined as in case~(\ref{II}) above (otherwise they cannot coincide with each other at indexes $\left.j_{0}, j_{1}\right)$. This means that for each $i \in \{0,1\}$ there exists a tuple of arrays $\overline{c_{i}}$ such that  we have $\mathcal{M}_{i} \models\left|\overline{a_{i}}-\mu_{i}\left(\overline{c_{i}}\right)\right|< \omega$. 

Since $\nu_{0}\left(\overline{a_{0}}\right)=\overline{a}=\nu_{1}\left(\overline{a_{1}}\right)$, this means that the arrays $\nu_{0}\left(\mu_{0}\left(\overline{c_{0}}\right)\right)=\nu_{1}\left(\mu_{1}\left(\overline{c_{0}}\right)\right)$ (where the equality follows from the commutativity condition $\nu_{0} \circ \mu_{0}=\nu_{1} \circ \mu_{1}$) and $\overline{a}$ differ only at finitely many indices (because $\overline{a_0}$ and $\mu_0(\overline{c_0})$ differ in a finite number of positions and are completed by $\nu_0$ in the same way). The same is true for $\nu_{1}\left(\mu_{1}\left(\overline{c_{1}}\right)\right)$ and $\overline{a}$, so that $\nu_{1}\left(\mu_{1}\left(\overline{c_{0}}\right)\right)$ and $\nu_{1}\left(\mu_{1}\left(\overline{c_{1}}\right)\right)$ differ only at finitely many indices. 

The same consequently holds for $\overline{c_{0}}, \overline{c_{1}}$ in $\mathcal{N}$ too, for $\mu_{0}\left(\overline{c_{0}}\right)$ and $\mu_{0}\left(\overline{c_{1}}\right)$ in $\mathcal{M}_{0}$ and for $\mu_{1}\left(\overline{c_{0}}\right)$ and $\mu_{1}\left(\overline{c_{1}}\right)$ in $\mathcal{M}_{1}$. This is because embeddings keep adding indices and elements so if in the co-domain they differ in a finite number of positions then they do it too on the input domain. 

Since the choice of $c$ in case~(\ref{II}) above is immaterial, we can suppose that $\overline{c_{0}}=\overline{c_{1}}$. Let us use $\overline{c}$ to name it. By case~(\ref{II}), $\overline{a} = \nu_{1}\left(\overline{a_{1}}\right)$ and $\nu_{0}\left(\mu_{0}(\overline{c})\right)) = \nu_{1}\left(\mu_{1}(\overline{c})\right)$ cannot differ at any $k \in $ ELEM$^{\mathcal{M}_{0}} \backslash$ ELEM$^{\mathcal{N}}$ since $\nu_1(\overline{a_1})(k) = \mu_0(\overline{c})(k)$. Similarly, $\nu_{0}\left(\mu_{0}(\overline{c})\right)=\nu_{1}\left(\mu_{1}(\overline{c})\right)$ and $\overline{a} = \nu_0(\overline{a_0})$ cannot differ at any $k \in \mathrm{ELEM}^{\mathcal{M}_{1}} \backslash \mathrm{ELEM}^{\mathcal{N}}$ since $\nu_0(\overline{a_0})(k) = \mu_1(\overline{c})(k)$. Thus $\overline{a}$ and $\nu_{0}\left(\mu_{0}(\overline{c})\right)=\nu_{1}\left(\mu_{1}(\overline{c})\right)$ possibly differ only for finitely many $k \in \operatorname{INDEX}^{\mathcal{N}}$. 

Since $\overline{a}=\nu_{0}\left(\overline{a_{0}}\right)=\nu_{1}\left(\overline{a_{1}}\right)$, the values of $\overline{a}$ at any $k \in$ INDEX$^{\mathcal{N}}$ belong to ELEM$^{\mathcal{M}_{0}} \cap$ ELEM$^{\mathcal{M}_{1}}=$ ELEM$^{\mathcal{N}}$ (the embeddings only extend the arrays). This means that 
\[
\overline{a} = \wrte{\nu_{0}\left(\mu_{0}(\overline{c})\right)}{I}{E}^{\mathcal{M}}= \nu_{0}\left(\mu_{0}\left(\wrte{\overline{c}}{I}{E})^{\mathcal{N}}\right)\right)
\]
for $I \subseteq \operatorname{INDEX}^{\mathcal{N}}$ and $E \subseteq \operatorname{ELEM}^{\mathcal{N}}$. Thus, we have that $\overline{a}$ is of the kind $\nu_{0}\left(\mu_{0}(\overline{d})\right)=\nu_{1}\left(\mu_{1}(\overline{d})\right)$ and from $\overline{a}=\nu_{0}\left(\overline{a_{0}}\right)=\nu_{1}\left(\overline{a_{1}}\right)$, we get $\overline{a_{0}}=\mu_{0}(\overline{d})$ and $\overline{a_{1}}=\mu_{1}(\overline{d})$ (because $\nu_{0}, \nu_{1}$ are injective), as was to be shown. \qed

\vspace{1em}

To conclude that ${\dif{R}^{j}}^{\mathcal{M}_i}(\overline{a}_i)$ is well-defined, it remains to prove that its second case is well-defined. But this follows by strong induction argument on $j$ (i.e. using the fact that all of the terms ${\dif{R}^1}^{\mathcal{M}}(\overline{a}), \ldots, {\dif{R}^{j-1}}^{\mathcal{M}}(\overline{a})$ are well-defined as inductive hypothesis).

Let us define the corresponding relations $R$ and 
$\map{R}$ on the structure $\mathcal{M}$. With start with relations $R$ over element tuples:
\[
{R}^{\mathcal{M}}(\overline{e})
=
\begin{cases}
{R}^{\mathcal{M}_i}(\overline{e}) 
& \text{if for some } i \in \{0,1\},\ \overline{e} \in (\es^{\mathcal{M}_i})^{\ar{R}}, \\[6pt]
\text{true}
& \text{otherwise}  
\end{cases}
\]
where $\ar{R}$ is the arity (or number of arguments) of the relation symbol $R$.

There is no ambiguity in the first case of this definition since $\overline{e} \in (\es^{\mathcal{M}_0})^{ar(R)} \cap (\es^{\mathcal{M}_1})^{ar(R)}$ if and only if $\overline{e} \in (\es^{\mathcal{N}})^{ar(R)}$. Given that $\mu_0$ and $\mu_1$ are embeddings acting as the inclusion over the element sort we have that 
\[
R^{\mathcal{M}_0}(\overline{e}) \iff R^{\mathcal{M}_0}(\mu_0(\overline{e})) \iff R^{\mathcal{N}}(\overline{e}) \iff R^{\mathcal{M}_1}(\mu_1(\overline{e})) \iff R^{\mathcal{M}_1}(\overline{e}) 
\]

Similarly, we define:
\[
\map{R}^{\mathcal{M}}(\overline{a})
=
\begin{cases}
\map{R}^{\mathcal{M}_i}(\overline{a}') 
& \text{if for some } i \in \{0,1\},\ \overline{a}=\nu_i(\overline{a}'), \\[6pt]
\text{true}
& \text{otherwise}  
\end{cases}
\]
There is no ambiguity in the first case of this definition since if $\overline{a} = \nu_0(\overline{a}_0) = \nu_1(\overline{a}_1)$ then, by Claim~(\ref{eq:claim}), we have that there exists a tuple of arrays $\overline{c}$ such that $\overline{a}_0 = \mu_0(\overline{c})$ and $\overline{a}_1 = \mu_1(\overline{c})$. Thus, 
\[
\map{R}^{\mathcal{M}_0}(\overline{a}_0) \iff \map{R}^{\mathcal{M}_0}(\mu_0(\overline{c})) \iff \map{R}^{\mathcal{N}}(\overline{c}) \iff \map{R}^{\mathcal{M}_1}(\mu_1(\overline{c})) \iff \map{R}^{\mathcal{M}_1}(\overline{a}_1)
\]

This finishes the definition of the structure $\mathcal{M}$. To prove that $\nu_0$ and $\nu_1$ are embeddings it remains to be shown that they preserve function symbols and preserve and reflect intepreted predicates.

For diff terms we have that
\[
{\dif{R}^j}^{\mathcal{M}}(\nu_i(\overline{a})) = {\dif{R}^j}^{\mathcal{M}_i}(\overline{a}) = \nu_i({\dif{R}^j}^{\mathcal{M}_i}(\overline{a}))
\]
where the first equality follows from the definition of the diff operators and the second equality follows from the fact that $\nu_i$ is defined as the inclusion over the index sort.

For relation symbols, we have that
\[
R^{\mathcal{M}}(\nu_i(\overline{e})) \iff R^{\mathcal{M}}(\overline{e}) \iff R^{\mathcal{M}_i}(\overline{e})
\]
where the first equivalence follows from the fact that $\nu_i$ acts as the inclusion over the element sort and the second equivalence follows from the definition of relations in the structure $\mathcal{M}$ since $\overline{e} \in (\es^{\mathcal{M}_i})^{ar(R)}$.

For map symbols, we have that 
\[
\map{R}^{\mathcal{M}}(\nu_i(\overline{a})) \iff \map{R}^{\mathcal{M}_i}(\overline{a}) 
\]
by the definition of $\map{R}^{\mathcal{M}}$.

With this, the proof of each $\nu_i$ being an embedding is concluded and thus, the amalgamation property for $\tc$ is established.
\end{proof}

\begin{cor}
The theory $\tc$ admits quantifier-free interpolation.
\end{cor}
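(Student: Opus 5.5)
The corollary follows by chaining two results already in place: Theorem~\ref{thm:tc-amalgam} and Theorem~\ref{thm:amalg-interp}.

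Theorem~\ref{thm:tc-amalgam} establishes that $\tc$, the theory of combinatory array logic with the iterated $\dif{R}^j$ operators and axioms (3.1)--(3.5), has the amalgamation property in the sense of Definition~\ref{def:amalgamation}. Theorem~\ref{thm:amalg-interp}, which specialises \cite[Theorem~2.3]{bruttomesso_quantifier-free_2012}, states that amalgamation for $\tc$ implies that $\tc$ admits quantifier-free interpolants. Applying modus ponens gives the claim.

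The one point to verify is that the hypotheses of the cited result from \cite{bruttomesso_quantifier-free_2012} are met. That result concerns universal theories, or more generally theories whose models are closed under substructures, so that the substructure generated by the interpretation of the shared constants is again a model. In our setting $\tc$ is universal by construction: this is exactly why the right-to-left direction of the original $\map{R}$ axiom was Skolemised into Axioms~\ref{ax:diff} and~\ref{ax:iterated} using the $\dif{R}^j$ symbols. Corollary~\ref{cor:preserv} then gives closure under substructures.

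The step most likely to be contentious is this matching of hypotheses, rather than the deduction itself. If more care is wanted, one can unpack the standard argument. Given quantifier-free $\psi,\phi$ with $\psi\wedge\phi$ $\tc$-unsatisfiable, suppose no quantifier-free interpolant over the shared symbols exists. A compactness argument then yields a model $\mathcal{M}_0\models\psi$ and a model $\mathcal{M}_1\models\phi$ whose substructures generated by the shared variables are isomorphic. Call this common substructure $\mathcal{N}$; it is a model because $\tc$ is universal. Amalgamating $\mathcal{M}_0$ and $\mathcal{M}_1$ over $\mathcal{N}$ via Theorem~\ref{thm:tc-amalgam} produces a model $\mathcal{M}$. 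Since quantifier-free formulas are preserved under embeddings, $\mathcal{M}\models\psi\wedge\phi$, which is a contradiction.
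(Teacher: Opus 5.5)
Your proposal is correct and follows the paper's argument: the paper also obtains the corollary by chaining Theorem~\ref{thm:tc-amalgam} with Theorem~\ref{thm:amalg-interp}. Your additional check that the cited result of Bruttomesso et al.\ needs $\tc$ to be universal, which Axioms (3.1)--(3.5) guarantee, is sound, as is your unpacked compactness-plus-amalgamation argument; both make explicit what the paper leaves to the citation.
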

\begin{proof}
It follows from Theorems~\ref{thm:amalg-interp} and \ref{thm:tc-amalgam}.
\end{proof}

\section{General Interpolation Properties}
\label{section:gqfinterp}
We start by recalling the definition of general quantifier-free interpolation from \cite{bruttomesso_quantifier-free_2014}.

\begin{defi}[General Quantifier-Free Interpolation]
Let $T$ be a theory in a signature we say that $T$ has the general quantifier-free interpolation property if and only if for every signature $\Sigma'$ (disjoint from $\Sigma$) and for every pair of ground $\Sigma \cup \Sigma'$-formulas $\phi, \psi$ such that $\phi \land \psi$ is 
$T$-unsatisfiable, there is a ground formula $\theta$ such that (i) $\phi$ $T$-entails $\theta$; (ii) $\theta \land \psi$ is $T$-unsatisfiable; (iii) all predicate, constants and function  symbols from $\Sigma'$ occurring in $\theta$ also occur in $\phi$ and $\psi$.
\end{defi}

The notion of strong amalgamation \cite{bruttomesso_strong_2012} extends the amalgamation property of Definition~\ref{def:amalgamation} as follows.

\begin{defi}
A theory $T$ has the strong amalgamation property if the preceding embeddings $\mu_1$, $\mu_2$ and the preceding model $\mathcal{M}$ can be chosen to satisfy the following additional condition: if for some $m_1$, $m_2$ we have $\mu_1(m_1) = \mu_2(m_2)$, then there exists an element $a$ in $\mathcal{N}$ such that $m_1 = a = m_2$. 
\end{defi}

Note that here, unlike in \cite{bruttomesso_strong_2012}, we speak of strong amalgamation and not sub-amalgamation, since the theories of interest are universal. In particular, this implies that $\mathcal{N}$ is a model of $T$. 

\begin{cor} \label{cor:tc-strong}
The theory $\tc$ has the strong amalgamation property.
\end{cor}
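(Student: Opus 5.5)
The plan is to reuse the amalgam $\mathcal{M}$ and the embeddings $\nu_0,\nu_1$ built in the proof of Theorem~\ref{thm:tc-amalgam}, and to check the extra condition of strong amalgamation sort by sort. The condition to verify reads (with the paper's indexing): if $\nu_0(m_0)=\nu_1(m_1)$, then there is $c$ in $\mathcal{N}$ with $m_0=\mu_0(c)$ and $m_1=\mu_1(c)$. As in that proof, Proposition~\ref{prop:inclusion} and Corollary~\ref{cor:functional} let us assume that $\mathcal{N},\mathcal{M}_0,\mathcal{M}_1$ are functional and that $\mu_0,\mu_1$ are inclusions on $\is$ and $\es$.

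For the sorts $\is$ and $\es$, the $\nu_i$ are inclusions. We arranged $(\is^{\mathcal{M}_0}\setminus\is^{\mathcal{N}})\cap(\is^{\mathcal{M}_1}\setminus\is^{\mathcal{N}})=\emptyset$, and likewise for $\es$. Hence $\nu_0(m_0)=\nu_1(m_1)$ forces $m_0=m_1\in\is^{\mathcal{N}}$ (resp.\ $\es^{\mathcal{N}}$), and we take $c=m_0$.

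One point needs care here. The disjointness was obtained by enlarging $\mathcal{N}$ with the common part $\is^{\mathcal{M}_0}\cap\is^{\mathcal{M}_1}$ (Figure~\ref{fig:index-embedding}). For strong amalgamation this move is illegitimate, because it changes the base. I would avoid it as follows: before amalgamating, replace $\mathcal{M}_1$ by an isomorphic copy, via the construction of Proposition~\ref{prop:inclusion}, whose non-$\mathcal{N}$ indices and elements are fresh. Then the intersections over $\is$ and $\es$ are exactly $\is^{\mathcal{N}}$ and $\es^{\mathcal{N}}$ from the start, and the enlargement is never needed.

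For the sort $\as$, the required statement is precisely Claim~(\ref{eq:claim}) applied to a single array rather than a tuple. If $\nu_0(a_0)=\nu_1(a_1)$, the claim provides $d\in\as^{\mathcal{N}}$ with $a_0=\mu_0(d)$ and $a_1=\mu_1(d)$.

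The main obstacle is to confirm that the claim's argument really uses only the sharp disjointness and the fresh witnesses $j_0,j_1,e_0,e_1$. The key step is ruling out case~(\ref{I}) of the definition of $\nu_i$: there the value at $j_{1-i}$ is $e_{1-i}$, which lies outside $\es^{\mathcal{M}_i}$ and therefore cannot match the other side. This step must survive the fresh-copy normalisation, and I expect it does, since those witnesses were adjoined freshly in the first place.

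Since the condition holds on every sort, $\mathcal{M}$ with $\nu_0,\nu_1$ is a strong amalgam. Because $\tc$ is universal, $\mathcal{N}$ is a model, so this is genuine strong amalgamation.
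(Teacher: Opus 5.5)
Your plan matches the paper's, which reads the corollary off the amalgam of Theorem~\ref{thm:tc-amalgam}: disjointness for $\is$ and $\es$, and Claim~(\ref{eq:claim}) for $\as$. Replacing the paper's enlargement of the base by a fresh isomorphic copy of $\mathcal{M}_1$ is a correct and necessary repair. The gap is at the step you yourself call the main obstacle, and the reason you give there does not work. In case~(\ref{I}), the value $\nu_i(a_i)(j_{1-i})=e_{1-i}$ must be compared with $\nu_{1-i}(a_{1-i})(j_{1-i})=a_{1-i}(j_{1-i})$, which ranges over $\es^{\mathcal{M}_{1-i}}$. So the fact that $e_{1-i}\notin\es^{\mathcal{M}_i}$ is irrelevant. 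Since $e_{1-i}$ was adjoined to $\es^{\mathcal{M}_{1-i}}$ and $\mathcal{M}_{1-i}$ is closed under $\operatorname{write}$, some arrays of $\mathcal{M}_{1-i}$ do take the value $e_{1-i}$ at $j_{1-i}$.

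Concretely, let $\is^{\mathcal{M}_i}=\is^{\mathcal{N}}\cup\{j_i\}$. Let $b_0\in\as^{\mathcal{M}_0}$ and $b_1\in\as^{\mathcal{M}_1}$ agree on $\is^{\mathcal{N}}$ but lie at infinite distance from every $\mu_i(c)$. For instance, take every $R$ true, $\is^{\mathcal{N}}=\mathbb{N}$, $\as^{\mathcal{N}}$ the eventually constant arrays, and $b_0,b_1$ alternating on $\mathbb{N}$. Then $a_0:=\wrte{b_0}{j_0}{e_0}$ and $a_1:=\wrte{b_1}{j_1}{e_1}$ are both in case~(\ref{I}). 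Moreover $\nu_0(a_0)=\nu_1(a_1)$: both equal $b_0$ on $\is^{\mathcal{N}}$, $e_0$ at $j_0$ and $e_1$ at $j_1$. Yet neither array comes from $\mathcal{N}$. So the claim, and with it the strong condition, fails for the construction as written. The paper's proof of the claim passes over this point with the same remark (``otherwise they cannot coincide at $j_0,j_1$''), so citing it does not close your argument.

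What is missing is to choose the padding value in case~(\ref{I}) outside $\es^{\mathcal{M}_{1-i}}$. For example, use a fresh element $e^{\ast}$ added only to the carrier of $\mathcal{M}$; tuples containing it fall under the ``otherwise'' clause of $R^{\mathcal{M}}$. Then a case-(\ref{I}) image disagrees with every $\nu_{1-i}$-image at $j_{1-i}$, so $a_0$ and $a_1$ are both in case~(\ref{II}). By Lemma~\ref{lem:cardinality} a single witness $c$ serves for both. Each $a_i$ then agrees with $\mu_i(c)$ outside finitely many indices of $\is^{\mathcal{N}}$, and at those indices their common values lie in $\es^{\mathcal{M}_0}\cap\es^{\mathcal{M}_1}=\es^{\mathcal{N}}$. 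Hence $a_i=\mu_i(d)$ with $d=\wrte{c}{I}{E}\in\as^{\mathcal{N}}$, which completes your sort-by-sort check.
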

\begin{proof}
The result follows from the proof of Theorem~\ref{thm:tc-amalgam}.
\end{proof}

It turns out that the notion of strong amalgamation is sufficient for general quantifier-free interpolation. 

\begin{thm}[\cite{bruttomesso_quantifier-free_2014}]
A theory has the general quantifier free interpolation property if and only if it is strongly sub-amalgamable.
\end{thm}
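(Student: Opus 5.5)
The statement is the characterisation theorem of Bruttomesso, Ghilardi and Ranise: general quantifier-free interpolation holds exactly when the theory is strongly sub-amalgamable. I would prove the two directions separately. In both, ground formulas over extra symbols $\Sigma'$ are translated into model-theoretic data via Robinson diagrams. Throughout, the paper's setting lets me identify sub-amalgamation with strong amalgamation of models, since the theories of interest are universal.

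\emph{Strong sub-amalgamation implies general interpolation.} Suppose $\phi \land \psi$ is $T$-unsatisfiable but has no ground interpolant. Let $\Sigma'_0$ be the shared $\Sigma'$-symbols and let $\Gamma$ be the set of ground $\Sigma\cup\Sigma'_0$-formulas $T$-entailed by $\phi$. If $\Gamma \cup \{\psi\}$ were $T$-unsatisfiable, compactness would give a finite conjunction $\theta$ of members of $\Gamma$ serving as interpolant, which is excluded. So $\Gamma\cup\{\psi\}$ has a model $\mathcal{M}_1$. Let $\mathcal{N}$ be the substructure of $\mathcal{M}_1$ generated by the shared ground terms. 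Its diagram $\Delta(\mathcal{N})$ is consistent with $\phi$ modulo $T$, because any finite part failing would put a negated ground literal into $\Gamma$, and that literal is false in $\mathcal{M}_1$. This yields a model $\mathcal{M}_0$ of $T\cup\{\phi\}$ into which $\mathcal{N}$ embeds. Strong amalgamation of $\mathcal{M}_0,\mathcal{M}_1$ over $\mathcal{N}$ then gives a model of $\phi\land\psi$, a contradiction. The strong part is what makes this work for general interpolation. Under strong amalgamation, the unshared $\Sigma'$-constants can be interpreted independently on the two sides without creating unwanted identifications. Ordinary amalgamation would suffice only when the shared signature is just constants.

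\emph{General interpolation implies strong sub-amalgamation.} Given $\mathcal{N}\subseteq\mathcal{M}_0,\mathcal{M}_1$ (Proposition~\ref{prop:inclusion}), take disjoint fresh constants naming the elements of $\mathcal{M}_0\setminus\mathcal{N}$ and of $\mathcal{M}_1\setminus\mathcal{N}$. Let $\Delta_0,\Delta_1$ be the diagrams, and add the disequations $c\neq d$ for each $c$ naming an element of $\mathcal{M}_0\setminus\mathcal{N}$ and $d$ naming one of $\mathcal{M}_1\setminus\mathcal{N}$. A model of $T\cup\Delta_0\cup\Delta_1\cup\{c\neq d\}$ is exactly a strong amalgam. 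If no such model exists, compactness gives finite $\phi\subseteq\Delta_0$ and $\psi\subseteq\Delta_1$ together with finitely many disequations $c_k\neq d_k$ that are jointly inconsistent. Reorganised, this says that $\phi\land\psi$ entails $\bigvee_k c_k = d_k$. Applying general interpolation to $\phi$ and $\psi\land\bigwedge_k c_k\neq d_k$ produces a ground $\theta$ over the $\mathcal{N}$-names. It is true in $\mathcal{M}_0$ since $\phi\vdash\theta$, and true in $\mathcal{N}$ because it is ground and $\mathcal{N}\subseteq\mathcal{M}_0$. Hence it is true in $\mathcal{M}_1$. But $\mathcal{M}_1$ satisfies $\psi$ and all the disequations $c_k\neq d_k$, since the names are interpreted as distinct elements, so this contradicts the unsatisfiability of $\theta$ with the second formula.

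The main obstacle is the disequation bookkeeping in the second direction. Interpreting each cross constant $c$ or $d$ on one side requires choosing dummy values for the other side's unshared names, and in a model of $T$ these choices must not collapse onto $\mathcal{N}$. I would use the freedom in $\Sigma'$ here: the unshared constants of $\psi$ are free symbols, so the disequations can be assigned to whichever side makes the interpolant condition apply. I would also check that universality of $T$ ensures substructures generated by terms are models, so that Corollary~\ref{cor:preserv} applies throughout.
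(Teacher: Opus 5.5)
\textbf{Gap in your second direction.} The constants $c_k$ in $\psi\land\bigwedge_k c_k\neq d_k$ name elements of $\mathcal{M}_0\setminus\mathcal{N}$ and also occur in $\phi$. They are therefore \emph{shared}, and condition (iii) lets the interpolant $\theta$ use them. So $\theta$ is not a formula over the names of $\mathcal{N}$, and it cannot be transferred to $\mathcal{M}_1$, where the $c_k$ denote nothing.

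Moving the disequations to the $\phi$-side only makes the $d_k$ shared instead. The bookkeeping in your last paragraph cannot repair this, because every disequation mixes one local name from each side.

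There is also a quick sanity check showing the argument must fail. It only ever interpolates between ground formulas whose extra symbols are constants, so it uses nothing beyond ordinary quantifier-free interpolation, and that does not imply strong sub-amalgamation. Take the pure-equality theory $\forall x,y,z.\,(x=y\lor y=z\lor x=z)$. It is amalgamable, hence quantifier-free interpolating, but not strongly amalgamable. With $\mathcal{N}=\{a\}$, $\mathcal{M}_0=\{a,b\}$, $\mathcal{M}_1=\{a,c\}$, your recipe gives $\phi\equiv a\neq b$ and $a\neq c\land b\neq c$. Then $\theta\equiv a\neq b$ is a legitimate interpolant, it mentions $b$, and no interpolant over $a$ alone exists.

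\textbf{The missing idea.} This is exactly where the \emph{general} property, with predicate or function symbols allowed in $\Sigma'$, is needed: replace the cross disequations by a fresh \emph{shared} unary predicate $P$.

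\begin{enumerate}
\item Add $P(c)$ to the $\mathcal{M}_0$-side for each name $c$ of an element of $\mathcal{M}_0\setminus\mathcal{N}$, and add $\lnot P(d)$ to the $\mathcal{M}_1$-side for each name $d$ of an element of $\mathcal{M}_1\setminus\mathcal{N}$.
\item Any model of $T$ together with both diagrams and these literals yields a strong amalgam, since $P(c)\land\lnot P(d)$ forces $c\neq d$.
\item If there is no such model, compactness and general interpolation give a ground $\theta$ whose only $\Sigma'$-symbols are names of elements of $\mathcal{N}$ and $P$.
\item Expand $\mathcal{M}_0$ by $P:=\mathcal{M}_0\setminus\mathcal{N}$ and $\mathcal{M}_1$ by $P:=\emptyset$. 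Both restrict to $P=\emptyset$ on $\mathcal{N}$, so $\mathcal{N}$ remains a common substructure, and every ground term of $\theta$ is evaluated inside $\mathcal{N}$.
\item Hence $\theta$ holds in $\mathcal{M}_0$ (because $\phi$ does), therefore in $\mathcal{N}$, therefore in $\mathcal{M}_1$. But $\mathcal{M}_1$ also satisfies the $\mathcal{M}_1$-side formula, contradicting the unsatisfiability of their conjunction with $\theta$.
\end{enumerate}

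\textbf{Your first direction.} It is right in outline, but your stated reason for needing strongness is off. Unshared $\Sigma'$-symbols, constants or not, never cause trouble. What strongness buys is $\nu_0(\mathcal{M}_0)\cap\nu_1(\mathcal{M}_1)=\nu_0(\mathcal{N})$. This lets each \emph{shared} $\Sigma'$ function or predicate symbol be interpreted consistently on the $\Sigma$-amalgam: copy $\mathcal{M}_0$ on tuples from $\nu_0(\mathcal{M}_0)$, copy $\mathcal{M}_1$ on tuples from $\nu_1(\mathcal{M}_1)$, and choose arbitrarily elsewhere. This expansion step should be written out.

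You also need not assume $T$ universal. Your $\mathcal{N}$ is a substructure of a model of $T$, which is exactly what sub-amalgamation permits. For reference, the paper gives no proof of this theorem and only cites Bruttomesso, Ghilardi and Ranise.
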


Since for universal theory sub-amalgamability implies strong amalgamation \cite{ghilardi_modularity_2018}, we have the following Corollary.



\begin{cor}
The theory $\tc$ has the general quantifier-free interpolation property.
\end{cor}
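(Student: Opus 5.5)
The plan is to obtain the statement by chaining results already stated, rather than by any new model construction. The characterisation theorem of \cite{bruttomesso_quantifier-free_2014} says that a theory has the general quantifier-free interpolation property if and only if it is strongly sub-amalgamable. So the task reduces to showing that $\tc$ is strongly sub-amalgamable.

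The first step is to observe that $\tc$ is a universal theory. Axioms (3.1)--(3.5) are all universally quantified clauses; this holds precisely because the right-to-left direction of the original $\map{R}$ axiom was Skolemised using the $\dif{R}^j$ symbols. Universality has two consequences. Every substructure of a model is again a model, by Corollary~\ref{cor:preserv}. Hence the sub-amalgamation setting of \cite{bruttomesso_strong_2012}, in which the base $\mathcal{N}$ is only required to be a substructure of models, collapses to ordinary amalgamation over a base that is itself a model. This is exactly the remark made after the definition of strong amalgamation. By the result quoted from \cite{ghilardi_modularity_2018}, it therefore suffices to establish strong amalgamation for $\tc$. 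That is Corollary~\ref{cor:tc-strong}.

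The substantive step, and the main obstacle, is justifying Corollary~\ref{cor:tc-strong} from the construction in the proof of Theorem~\ref{thm:tc-amalgam}. We must check that $\nu_0(m_0)=\nu_1(m_1)$ forces $m_0=\mu_0(c)$ and $m_1=\mu_1(c)$ for some $c$ in $\mathcal{N}$. I would verify this sort by sort.
\begin{itemize}
\item For $\is$ and $\es$, the $\nu_i$ are inclusions. After the reductions in that proof we have $\is^{\mathcal{M}_0}\cap\is^{\mathcal{M}_1}=\is^{\mathcal{N}}$, and likewise for $\es$. A common image therefore already lies in $\mathcal{N}$.
\item For $\as$, this is precisely Claim~(\ref{eq:claim}), applied to single arrays rather than tuples. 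The fresh indices $j_0,j_1$ and fresh elements $e_0,e_1$ rule out case~(\ref{I}) when two images coincide. Case~(\ref{II}) then yields a common $\overline{d}$ in $\mathcal{N}$.
\end{itemize}
Once this is checked, the amalgam $\mathcal{M}$ built there witnesses strong amalgamation.

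Combining these steps, $\tc$ is strongly (sub-)amalgamable. The theorem of \cite{bruttomesso_quantifier-free_2014} then gives the general quantifier-free interpolation property. The one point I would double-check is that the characterisation of \cite{bruttomesso_quantifier-free_2014} applies to the many-sorted signature of $\tc$ with the infinitely many symbols $\dif{R}^j$ fixed in advance. Since we fixed a finite relational base signature and the argument there is signature-agnostic, I expect this to go through without modification.
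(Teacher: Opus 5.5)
Your proposal is correct and follows the paper's route: since $\tc$ is universal, strong amalgamation (Corollary~\ref{cor:tc-strong}) gives strong sub-amalgamability, and the characterisation of \cite{bruttomesso_quantifier-free_2014} then yields general quantifier-free interpolation; your sort-by-sort check (carrier disjointness for $\is$, $\es$, and Claim~(\ref{eq:claim}) for arrays) just spells out what the paper compresses into ``follows from the proof of Theorem~\ref{thm:tc-amalgam}''. The one point to phrase carefully is that $\is^{\mathcal{M}_0}\cap\is^{\mathcal{M}_1}=\is^{\mathcal{N}}$ (and likewise for $\es$) should be obtained by renaming the carriers apart, not via the enlarged base of Figure~\ref{fig:index-embedding}, since amalgamating over that base would identify elements outside $\mathcal{N}$ and break strongness.
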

\begin{proof}
The result follows from Corollary~\ref{cor:tc-strong}.
\end{proof}

\section{Uniform Interpolation Properties}
\label{section:unifinterp}
We will now study the notion of uniform interpolation for our logics of interest. We will see that, at least in the set of examples that we study, uniform interpolation holds if and only if the logic admits quantifier-elimination procedures. To compute the corresponding interpolants, we will thus only need to run elimination of existential quantifiers.

Let us start with the definition of uniform interpolant or cover, which was introduced in \cite{gulwani_cover_2008}.

\begin{defi}[Uniform interpolation]
\label{def:unifinterp}
Let $T$ be a theory and, given an existential formula $\exists \overline{e}. \phi(\overline{e}, \overline{y})$, the set of residues of $\exists \overline{e}. \phi(\overline{e}, \overline{y})$ is
$\operatorname{Res}(\exists \overline{e}. \phi(\overline{e}, \overline{y}))=\{\theta(\overline{y}, \overline{z}) \mid T \models \exists \overline{e}. \phi(\overline{e}, \overline{y}) \rightarrow \theta(\overline{y}, \overline{z}) \text{ and } \theta(\overline{y}, \overline{z}) \text{ is quantifier-free} \}$.
A formula $\psi(\overline{y})$ is a $T$-cover of $\exists \overline{e}. \phi(\overline{e}, \overline{y})$ if and only if $\psi(\overline{y}) \in \operatorname{Res}(\exists \overline{e}. \phi(\overline{e}, \overline{y}))$ and $\psi(\overline{y})$ implies (modulo $T)$ all the other formulas in $\operatorname{Res}(\exists \overline{e}. \phi(\overline{e}, \overline{y}))$. The theory $T$ has the \textit{uniform quantifier-free interpolation} property if every existential formula $\exists \overline{e}. \phi(\overline{e}, \overline{y})$ has a $T$-cover.
\end{defi}

Existential quantifier elimination yields uniform interpolants.

\begin{prop}
\label{prop:unifqe}
If a theory $T$ eliminates existential quantifiers then it has the uniform quantifier-free interpolation property. 
\end{prop}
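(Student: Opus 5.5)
The plan is to take as cover the quantifier-free formula produced by existential quantifier elimination itself. Let $\exists \overline{e}.\,\phi(\overline{e},\overline{y})$ be given with $\phi$ quantifier-free. Since $T$ eliminates existential quantifiers, there is a quantifier-free formula $\psi(\overline{y})$, whose free variables are among $\overline{y}$, such that
\[
T \models \exists \overline{e}.\,\phi(\overline{e},\overline{y}) \leftrightarrow \psi(\overline{y}).
\]
I will show that this $\psi$ is a $T$-cover in the sense of Definition~\ref{def:unifinterp}.

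First, $\psi \in \operatorname{Res}(\exists \overline{e}.\,\phi)$. The left-to-right half of the equivalence gives $T \models \exists \overline{e}.\,\phi(\overline{e},\overline{y}) \rightarrow \psi(\overline{y})$. Moreover, $\psi$ is quantifier-free and mentions only $\overline{y}$, which is trivially of the allowed shape $\theta(\overline{y},\overline{z})$ with $\overline{z}$ empty.

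Second, $\psi$ must imply every residue. Take any $\theta(\overline{y},\overline{z}) \in \operatorname{Res}(\exists \overline{e}.\,\phi)$, so that $T \models \exists \overline{e}.\,\phi(\overline{e},\overline{y}) \rightarrow \theta(\overline{y},\overline{z})$, read with the free variables $\overline{y},\overline{z}$ universally quantified. Now fix a model $\mathcal{M}$ of $T$ and an assignment to $\overline{y},\overline{z}$ under which $\psi(\overline{y})$ holds. By the right-to-left half of the equivalence, $\exists \overline{e}.\,\phi(\overline{e},\overline{y})$ holds in $\mathcal{M}$ under this assignment. Since the residue condition holds for every assignment, $\theta(\overline{y},\overline{z})$ holds as well. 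Hence $T \models \psi(\overline{y}) \rightarrow \theta(\overline{y},\overline{z})$, and $\psi$ is the required cover.

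The only point needing care is the bookkeeping of variables. We need that the eliminated formula $\psi$ contains no variables beyond $\overline{y}$, which is part of what quantifier elimination guarantees. We also need that the extra parameters $\overline{z}$ in residues are treated as universally quantified, so that the implication from $\psi$ to $\theta$ holds for arbitrary values of $\overline{z}$. Beyond that, the argument is a direct unfolding of the definitions, and I expect no substantive obstacle.
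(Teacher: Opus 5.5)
Your proposal is correct and follows the same route as the paper: take the quantifier-free equivalent $\psi(\overline{y})$ supplied by existential quantifier elimination as the $T$-cover. The paper's proof states this choice without further argument. You additionally check that $\psi \in \operatorname{Res}(\exists \overline{e}.\,\phi)$, and that $\psi$ implies every residue via the right-to-left direction of the equivalence.
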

\begin{proof}
Let $T$ be a theory that eliminates existential quantifiers. By Definition~\ref{def:unifinterp}, the theory has uniform interpolation if and only if for every formula $\exists \overline{e}. \phi(\overline{e}, \overline{y})$, there exists a formula $ \theta(\overline{y},\overline{z})$, such that $T \models \exists \overline{e}. \phi(\overline{e},\overline{y}) \to \theta(\overline{y},\overline{z})$ and for any formula $\theta'(\overline{y},\overline{z})$, such that $T \models \exists \overline{e}. \phi(\overline{e},\overline{y}) \to \theta'(\overline{y},\overline{z})$, we get that $T \models \theta(\overline{y},\overline{z}) \to \theta'(\overline{y},\overline{z})$. Now, existential quantifier elimination implies that for $\exists \overline{e}. \phi(\overline{e}, \overline{y})$, we have an equivalent formula in the theory without quantifiers $\phi'(\overline{y})$. We can then take $\theta(\overline{y},\overline{z}) := \phi'(\overline{y})$ as a $T$-cover.
\end{proof}

\subsection{Failure of Uniform Interpolation for Combinatory Array Logic}

The following counterexample has been provided by Ghilardi in unpublished work \cite{ghilardi_interpolation_2024}. We present it for the sake of completeness. The example is formulated for the extensional theory of arrays with diff, which is subsumed by combinatory array logic with diffs.

Let us start defining the notions of ultrafilter and ultraproduct, since they are needed for the formulation of the counterexample.

\begin{defi}[\cite{keisler_ultraproduct_2010}] \label{def:ultraproduct}
If $I$ is a non-empty set, a filter over $I$ is a set $D$ of subsets of $I$ such that:

\begin{itemize}
    \item $\emptyset \notin D, I \in D$.
    \item If $X,Y \in D$ then $X \cap Y \in D$.
    \item If $X \in D$ and $X \subseteq Y \subseteq I$ then $Y \in D$.
\end{itemize}

An ultrafilter is a filter that is maximal with respect to inclusion.

Let $\mathcal{F}$ be a filter.

The reduced product of a collection of models $\mathcal{A}_i = \langle A_i,\ldots \rangle$ with $i \in I$ modulo $\mathcal{F}$ is  a structure $\mathcal{A} = \prod_{i \in I} \mathcal{A}_i / \mathcal{F}$ whose domain is the set $\{ g: I \to \cup_{i \in I} A_i \, \mid \, \forall i \in I. g(i) \in A_i \} / N$ where $N$ is the equivalence relation given by 
\[
g N g' \iff \{ i \in I \, \mid \, g(i) = g'(i) \} \in \mathcal{F}
\]
We write $[g]_{N}$ for the equivalence class of function $g$ modulo the equivalence relation $N$. When $N$ is clear from the context, we write $[g]$ omitting $N$.

Each constant symbol $c$ is interpreted as $c^{\mathcal{A}} = [ \{ g: I \to \cup_{i \in I} A_i \, \mid \, \forall i \in I. g(i) = c^{A_i} \} ]_{N}$.

For each function symbol $f$, $f^{\mathcal{A}}([g_1]_{N}, \ldots, [g_k]_{N}) = [g_0]_{N}$ where for each $i \in I$, 
\[
g_0(i) = f^{\mathcal{A}_i}(g_1(i),\ldots,g_k(i))
\]

For each relation symbol $R$, $R^{\mathcal{A}}([g_1]_N, \ldots,[g_k]_N)$ holds if and only if
\[
\{ i \in I \mid R^{\mathcal{A}_i}(g_1(i),\ldots,g_k(i)) \} \in \mathcal{F}
\]
If $\mathcal{F}$ is an ultrafilter, then the resulting structure is called an ultraproduct.
\end{defi}

We will not show that the notions given in Definition~\ref{def:ultraproduct} are independent of the chosen class representative as this proof is standard. One key result that we use is Łoś's theorem. 

\begin{thm}[Łoś]
Let $\varphi(x_1,\ldots,x_k)$ be a first-order formula and $[g_1],\ldots,[g_k]$ be elements of $\mathcal {A} = \prod_{i \in I} \mathcal{A}_i / \mathcal{U}$ where $\mathcal{U}$ is an ultrafilter.

Then, $\mathcal{A} \models \varphi([g_1],\ldots,[g_k])$ if and only if $\{ i \in I | \mathcal{A}_i \models \varphi(g_1(i),\ldots,g_k(i)) \} \in \mathcal{U}$.
\end{thm}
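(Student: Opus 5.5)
We prove Łoś's theorem by induction on the structure of the first-order formula $\varphi$, after first proving the analogous statement for terms. We may assume $\varphi$ is built from atomic formulas using only $\lnot$, $\land$ and $\exists$, since every other connective and quantifier is definable from these. Write $\|\varphi(\overline{g})\| := \{ i \in I \mid \mathcal{A}_i \models \varphi(g_1(i),\ldots,g_k(i)) \}$ for the set of coordinates at which $\varphi$ holds.

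\emph{Terms.} By induction on a term $t(\overline{x})$ we show $t^{\mathcal{A}}([g_1],\ldots,[g_k]) = [h]$ with $h(i) = t^{\mathcal{A}_i}(g_1(i),\ldots,g_k(i))$. This is immediate from the coordinatewise interpretation of constants and function symbols in Definition~\ref{def:ultraproduct}, whose independence of representatives the paper takes as standard.

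\emph{Atomic formulas.} For $t_1 = t_2$, we have $[h_1] = [h_2]$ iff $\{ i \mid h_1(i) = h_2(i) \} \in \mathcal{U}$, by the definition of $N$. For $R(t_1,\ldots,t_m)$, the claim is exactly the definition of $R^{\mathcal{A}}$. Note that the atomic case only needs $\mathcal{U}$ to be a filter.

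\emph{Conjunction.} We need $\|\varphi \land \psi\| = \|\varphi\| \cap \|\psi\| \in \mathcal{U}$ iff both $\|\varphi\|$ and $\|\psi\|$ lie in $\mathcal{U}$. The forward direction uses upward closure and the backward direction uses closure under intersection. Again only filter properties are needed.

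\emph{Negation.} This is the step where ultrafilter maximality enters. We must show that for $X \subseteq I$, exactly one of $X$ and $I \setminus X$ belongs to $\mathcal{U}$. They cannot both belong, since their intersection is $\emptyset \notin \mathcal{U}$. Suppose $X \notin \mathcal{U}$. Consider the family $\{ Y \subseteq I \mid Y \supseteq X \cap U \text{ for some } U \in \mathcal{U} \}$. If $X \cap U \neq \emptyset$ for all $U \in \mathcal{U}$, this family is a filter properly extending $\mathcal{U}$, contradicting maximality. Hence some $U \in \mathcal{U}$ satisfies $U \subseteq I \setminus X$, and upward closure gives $I \setminus X \in \mathcal{U}$. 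With this fact, $\|\lnot\varphi\| = I \setminus \|\varphi\|$ lies in $\mathcal{U}$ iff $\|\varphi\|$ does not, which matches $\mathcal{A} \not\models \varphi$ by the induction hypothesis.

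\emph{Existential quantifier.} This is the main obstacle, because it needs the axiom of choice. Suppose $\mathcal{A} \models \exists y.\,\varphi(y,[\overline{g}])$ with witness $[h]$. By induction $\|\varphi(h,\overline{g})\| \in \mathcal{U}$, and this set is contained in $\|\exists y.\varphi\|$, so upward closure gives the forward direction. Conversely, suppose $X = \|\exists y.\varphi\| \in \mathcal{U}$. For each $i \in X$, choose a witness $h(i) \in A_i$; for $i \notin X$, let $h(i)$ be an arbitrary element of $A_i$, which exists because structures are nonempty. This uses choice. Then $\|\varphi(h,\overline{g})\| \supseteq X$, so it belongs to $\mathcal{U}$, and by induction $\mathcal{A} \models \varphi([h],[\overline{g}])$.

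In the many-sorted setting of the paper, the same argument applies sortwise, choosing $h(i)$ in the carrier of the appropriate sort.
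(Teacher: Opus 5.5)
The paper does not prove this statement. It quotes Łoś's theorem as a standard result, next to the definitions taken from Keisler, and uses it only in the proof of Proposition~\ref{prop:ghilardi}. So there is no argument in the paper to compare yours against.

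Your proof is the standard induction on formulas, and it is correct. In particular, you derive the dichotomy ``$X \in \mathcal{U}$ or $I \setminus X \in \mathcal{U}$'' from maximality of the filter. That step is genuinely needed here, because the paper defines an ultrafilter as a maximal filter rather than through this dichotomy. Your check that the extended family is a filter is sound: it excludes $\emptyset$, it is closed under intersection via $X \cap (U_1 \cap U_2)$, and it contains $X$. The other cases (terms, atoms, conjunction, the existential step with choice, and the sortwise remark) also go through as you wrote them.

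Compared with how the paper uses the theorem, your proof covers more than is needed. In Proposition~\ref{prop:ghilardi}, Łoś is applied only in two places:
\begin{itemize}
\item to the quantifier-free formula $\lnot UI(c_1,c_2,d_1,d_2)$;
\item to conclude that the ultraproduct is a model of the universal theory $\tc$.
\end{itemize}
The quantifier-free part of your induction (terms, atoms, $\land$, $\lnot$) suffices for both. For a universal axiom $\forall \overline{y}.\psi(\overline{y})$ and any tuple $[\overline{h}]$, the set $\|\psi(\overline{h})\|$ is all of $\mathbb{N}$ and hence lies in $\mathcal{U}$. So the existential step, which you rightly single out as the delicate one because it uses choice, is not needed for the paper's application.
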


We can now proceed with the counterexample itself.

\begin{prop} \label{prop:ghilardi}
The formula 
\[
G(c_1,c_2,d_1,d_2,i) := c_1[i] \neq c_2[i] \land d_1[i] = d_2[i]
\]
cannot eliminate $i$ uniformly in  combinatory array logic with diff functions.
\end{prop}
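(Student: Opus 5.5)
Suppose, towards a contradiction, that $\exists i.\, G(c_1,c_2,d_1,d_2,i)$ has a cover $\psi(c_1,c_2,d_1,d_2)$ in the sense of Definition~\ref{def:unifinterp}. The formula $\exists i.\, G$ says that $c_1$ and $c_2$ differ at some index where $d_1$ and $d_2$ agree. The plan is to produce residues $\theta_n$ that approximate $\exists i.\,G$ ever more precisely, and then use compactness in the form of Łoś's theorem to show that no single quantifier-free formula can be equivalent to, or imply, all of them.

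\textbf{The residues.} Writing $\mathrm{diff}^k$ for the iterated diff applied to the pair $(d_1,d_2)$ (i.e. $\dif{=}^k(d_1,d_2)$), consider for each $n$
\[
\theta_n := \bigvee_{k\le n}\mathrm{diff}^{k}=\mathrm{diff}^{k+1}\ \to\ \Bigl(c_1\neq \mathrm{write}(c_2,\overline{D},\overline{c_1[D]})\Bigr),
\]
where $\overline{D}$ ranges over $\mathrm{diff}^1,\ldots,\mathrm{diff}^k$. In words: if $d_1$ and $d_2$ differ in at most $n$ places, then $c_1$ and $c_2$ differ somewhere outside those places. Each $\theta_n$ is a consequence of $\exists i.\,G$, hence a residue. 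Since $\psi$ is a cover, $\psi\vdash\theta_n$ for every $n$.

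\textbf{Building a counter-model.} For each $n$, take a standard model $\mathcal{A}_n$ with:
\begin{itemize}
\item $d_1,d_2$ differing at exactly $n+1$ indices;
\item $c_1,c_2$ differing exactly on those same indices;
\item the diff functions interpreted according to the standard semantics.
\end{itemize}
Then $\exists i.\,G$ is false in $\mathcal{A}_n$, while every $\theta_m$ with $m\le n$ holds there vacuously, because its antecedent fails. Now take a nonprincipal ultraproduct $\mathcal{A}=\prod_n\mathcal{A}_n/\mathcal{U}$. The theory is universal with Skolem diff terms, so by Łoś $\mathcal{A}$ is a model of $\tc$. By Łoś again, $\exists i.\,G$ fails in $\mathcal{A}$, since $\lnot\exists i.\,G$ holds in every factor. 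Meanwhile every $\theta_m$ holds in $\mathcal{A}$, since it holds in cofinitely many factors. In $\mathcal{A}$, the arrays $d_1,d_2$ differ at infinitely many (nonstandard) indices, and $c_1,c_2$ differ exactly where $d_1,d_2$ differ.

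\textbf{Main obstacle: refuting $\psi$.} It remains to show $\psi$ fails somewhere it should not. The route I would try first is to compare $\mathcal{A}$ with a second model $\mathcal{B}$ in which $\exists i.\,G$ holds and which satisfies the same quantifier-free formulas as $\mathcal{A}$. Concretely, extend $\mathcal{A}$ by one fresh index $j$ (using the disjoint-extension technique from the proof of Theorem~\ref{thm:tc-amalgam}) and set $c_1[j]\neq c_2[j]$, $d_1[j]=d_2[j]$. The diff values are kept as in $\mathcal{A}$; this is consistent because $d_1,d_2$ already differ at infinitely many indices, so no iterated diff is ever forced to stabilise. Then $\mathcal{A}$ embeds into $\mathcal{B}$. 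Because $\psi$ is quantifier-free it is preserved under this embedding, and because $\psi$ is a cover, $\lnot\psi$ must hold in $\mathcal{A}$, where $\exists i.\,G$ fails.

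The delicate step is the converse direction. I would show that every quantifier-free formula true in $\mathcal{B}$ of $c_1,c_2,d_1,d_2$ is already true in $\mathcal{A}$. The reason to expect this is that quantifier-free terms only reach indices generated by diffs, and the fresh index $j$ is not among them. That yields $\mathcal{A}\models\psi$, a contradiction. Verifying this term-generation claim, and in particular that the diff interpretation in $\mathcal{B}$ can be chosen to satisfy Axiom~\ref{ax:iterated} while remaining compatible with $\mathcal{A}$, is where I expect the real work to lie.
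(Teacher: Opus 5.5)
Your architecture matches the paper's: residues built from iterated diffs, factor models, a nonprincipal ultraproduct, a one-index extension, and preservation of quantifier-free formulas. The only difference is that the roles of $(c_1,c_2)$ and $(d_1,d_2)$ are swapped. However, the step that closes the contradiction is not justified. You assert $\mathcal{A}\models\lnot\psi$ ``because $\psi$ is a cover'' and $\exists i.\,G$ fails in $\mathcal{A}$. A cover is a consequence of $\exists i.\,G$, and nothing forces it to fail where $\exists i.\,G$ fails. Your own $\mathcal{B}$ refutes the inference: $\mathcal{B}\models\exists i.\,G$ gives $\mathcal{B}\models\psi$, hence $\mathcal{A}\models\psi$, although $\exists i.\,G$ fails in $\mathcal{A}$. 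What you need, and what the paper does, is that each factor falsifies some residue. Then $\lnot\psi$ holds in every factor and passes to the ultraproduct by \L{}o\'s. Your $\mathcal{A}_n$ does falsify $\theta_{n+1}$: there $\diff{=}{n+1}{d_1,d_2}=\diff{=}{n+2}{d_1,d_2}$ and $c_1=\operatorname{write}(c_2,\overline{D},\overline{c_1[D]})$. You never use this. Instead you record that every $\theta_m$ holds in $\mathcal{A}$, which plays no role in the contradiction.

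The step you call delicate is in fact immediate. Embeddings preserve and reflect quantifier-free formulas with parameters, so $\mathcal{A}\models\psi$ follows from $\mathcal{B}\models\psi$ without any term-generation argument. The real work is making $\mathcal{B}$ a model of $\tc$ that contains all of $\mathcal{A}$ as a substructure, not just $c_1,c_2,d_1,d_2$.
\begin{itemize}
\item Every array of the ultraproduct needs a value at $j$.
\item $R$ must hold at $j$ for every tuple satisfying $\map{R}$, and for every tuple whose iterated diffs stabilise at a standard stage.
\item In particular $a[j]=b[j]$ whenever $a,b$ differ in boundedly many places. So the factors must be chosen so that, say, $c_1$ differs from both $d_1$ and $d_2$ in unboundedly many places. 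With $d_i=c_i$, the requirements $c_1[j]\neq c_2[j]$ and $d_1[j]=d_2[j]$ are incompatible.
\end{itemize}
One workable construction assigns a fresh element at $j$ to each class of arrays modulo ``differ in boundedly many places''. It then merges the classes of $d_1$ and $d_2$, and makes the element relations true on fresh tuples.

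Finally, $\theta_n$ as written is not a residue. If $d_1=d_2$, then $\diff{=}{1}{d_1,d_2}$ is arbitrary and may be exactly the index where $c_1,c_2$ differ. Add $d_1\neq d_2$ to the antecedent to fix this.
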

\begin{proof}
By contradiction, assume the existence of a uniform interpolant of $G$, denoted by $UI(c_1,c_2,d_1,d_2)$. Consider the formula
\[
\phi_n(c_1,c_2,d_1,d_2) := c_1 \sim_n c_2 \to \bigvee_{j = 1}^n d_1[\diff{=}{j}{c_1,c_2}] = d_2[\diff{=}{j}{c_1,c_2}]
\]
where $c_1 \sim_n c_2$ encodes the property that $c_1$ and $c_2$ differ in at most $n$ indices, which can be written in our context as
\[
\diff{=}{n+1}{c_1,c_2} = \diff{=}{n}{c_1,c_2}
\]
Thus, $\phi_n$ encodes the property that if $c_1$ and $c_2$ differ in at most $n$ positions then $d_1$ and $d_2$ must coincide in some of them. Clearly, 
\begin{align} \label{eq:uniform}
\exists i. G(c_1,c_2,d_1,d_1,i) \to \phi_n(c_1,c_2,d_1,d_2)
\end{align}
is a valid formula for all $n$. By definition of uniform interpolant, it follows that 
\begin{align} \label{eq:impl}
UI(c_1,c_2,d_1,d_2) \to \phi_n(c_1,c_2,d_1,d_2)   
\end{align}
is also valid.

We reach a contradiction by building a model satisfying both $UI(c_1,c_2,d_1,d_2)$ and $\lnot UI(c_1,c_2,d_1,d_2)$. Let $\mathcal{M}_n$ (for $n > 0$) be models such that 
\[
\mathcal{M}_n \not\models \phi_n(c_1,c_2,d_1,d_2)
\]
For instance, take $c_1,c_2$ differing in exactly $n$ positions and let $d_1:=c_1, d_2:=c_2$. Thus, $\mathcal{M}_n \models \lnot \phi_n(c_1,c_2,d_1,d_2)$ and by the validity of Formula~(\ref{eq:impl}), $\mathcal{M}_n \models \lnot UI(c_1,c_2,d_1,d_2)$. 

One can then take an ultraproduct over on the natural numbers $\mathcal{U}$ to obtain that 
\begin{align} \label{eq:ultra}
\Pi_{n \in \mathbb{N}} \mathcal{M}_n / \mathcal{U}  \models \lnot UI(c_1,c_2,d_1,d_2)
\end{align}
where for each $n$ and each vector $v \in \{c_1,c_2,d_1,d_2\}$, we define $v(n)$ as the corresponding vector satisfying $\lnot UI(c_1,c_2,d_1,d_n)$ in $\mathcal{M}_n$. By Łoś's theorem, Judgement~(\ref{eq:ultra}) follows from the fact that
\[
\{ n \mid \mathcal{M}_n \models \lnot UI(c_1(n),c_2(n),d_1(n),d_2(n)) \} \in \mathcal{U}
\]
since a non-empty ultrafilter always contains the universal set. For similar reasons, we also have that $\Pi_{n \in \mathbb{N}} \mathcal{M}_n / \mathcal{U}$ is a model of $\tc$. 

Now consider an extension $\mathcal{N} \supseteq \Pi_{n \in \mathbb{N}} \mathcal{M}_n / \mathcal{U}$ defined by adding a new index $j$ to the index set of the ultraproduct. Since the index set of $\Pi_{n \in \mathbb{N}} \mathcal{M}_n / \mathcal{U}$ is infinite, we are free to define $c_1(j), c_2(j), d_1(j)$ and $d_2(j)$ such that $c_1(j) \neq c_2(j)$ and $d_1(j) = d_2(j)$ without affecting the previous argument about the formulas $\phi_n$. Thus, $\mathcal{N} \models \exists i. G(c_1,c_2,d_1,d_2,i)$ and by Formula~(\ref{eq:uniform}), we have that $\mathcal{N} \models UI(c_1,c_2,d_1,d_2)$. Finally, since quantifier-free formulas are preserved by substructures (see Corollary~\ref{cor:preserv}), it follows that $\Pi_{n \in \mathbb{N}} \mathcal{M}_n / \mathcal{U} \models UI(c_1,c_2,d_1,d_2)$, a contradiction.
\end{proof}

\begin{cor}
Combinatory array logic does not have the uniform interpolation property.
\end{cor}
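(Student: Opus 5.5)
The corollary follows directly from Proposition~\ref{prop:ghilardi} and Definition~\ref{def:unifinterp}. Uniform quantifier-free interpolation means that every existential formula $\exists \overline{e}.\,\phi(\overline{e},\overline{y})$ has a $T$-cover. So it suffices to exhibit one existential formula with no cover, and we take $\exists i.\, G(c_1,c_2,d_1,d_2,i)$. Proposition~\ref{prop:ghilardi} is exactly the statement that no quantifier-free $UI(c_1,c_2,d_1,d_2)$ can serve as a uniform interpolant of this formula.

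First I will make explicit how the proof of Proposition~\ref{prop:ghilardi} matches the notion of $T$-cover. A cover $\psi(c_1,c_2,d_1,d_2)$ must belong to $\operatorname{Res}(\exists i.\,G)$, so $\exists i.\, G \to \psi$ is valid. It must also imply every other residue. Each $\phi_n$ is a quantifier-free residue, since Formula~(\ref{eq:uniform}) is valid for every $n$. Hence $\psi \to \phi_n$ is valid for all $n$, which is Formula~(\ref{eq:impl}) with $UI := \psi$. These are precisely the two properties used in the proof of the proposition.

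The ultraproduct argument then gives a model of $\tc$ that satisfies both $\lnot\psi$ and $\psi$. In that proof, $\lnot\psi$ comes from Łoś's theorem, and $\psi$ comes from adding a fresh index witnessing $G$ and using Corollary~\ref{cor:preserv} to pull the quantifier-free $\psi$ back to the substructure. This contradiction shows that no cover exists, so $\tc$ fails the uniform quantifier-free interpolation property.

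The remaining point, and the only real obstacle, is the signature. The corollary speaks of combinatory array logic, while Proposition~\ref{prop:ghilardi} works with iterated $\dif{=}$ terms, i.e.\ diffs for the equality relation used as an element relation in $\map{=}$. I would note that equality on elements is an admissible relation $R$, so $\dif{=}^j$ belongs to $\tc$ with iterated diffs, the theory for which we proved quantifier-free interpolation. The failure of uniform interpolation is therefore stated for that same theory. Any residue available in a richer language only makes the cover harder to find, so the argument transfers as stated.
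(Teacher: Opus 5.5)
Your proposal is correct and follows the paper's route: the paper records the corollary as an immediate consequence of Proposition~\ref{prop:ghilardi}, because a $\tc$-cover of $\exists i.\,G(c_1,c_2,d_1,d_2,i)$ is exactly the quantifier-free $UI$ that the proposition refutes. Two small caveats: your closing claim that a richer language ``only makes the cover harder to find'' is not a valid principle (a richer language also supplies more candidate covers), but you do not need it, since the proposition already quantifies over every quantifier-free $UI$ of $\tc$ with iterated diffs; and if you spell out the fresh-index step yourself, avoid the paper's instance $d_1:=c_1$, $d_2:=c_2$, which makes $G$ contradictory, and instead take $d_1,d_2$ differing exactly where $c_1,c_2$ do, with no $d_l$ finitely different from any $c_k$, so that a new index can separate $c_1,c_2$ while giving $d_1,d_2$ the same value.
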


\subsection{Uniform Interpolation for the Theory of Sets with Cardinalities}
\label{section:uibapa}

Uniform interpolation follows from a combination of the quantifier-elimination procedures of Zarba \cite{zarba_quantifier_2004}, Revesz \cite{revesz_quantifier-elimination_2004} and later Rinard-Kun\v{c}ak \cite{kuncak_algorithm_2005}. Since the theory eliminates quantifiers, it follows that it also eliminates existential quantifiers. The result of eliminating existential quantifiers is thus the desired uniform interpolant (see Proposition~\ref{prop:unifqe}). We now sketch the existential quantifier-elimination procedure for the case of our signature.

\subsubsection*{Existential quantifier-elimination procedure}

As it is well-known it is sufficient to show how to eliminate each existential quantifier from a conjunction of literals of the theory, since, otherwise, we can convert the formula into disjunctive normal form and distribute the existential quantifier in each disjunct (see \cite[Lemma~2.7.4]{hodges_model_1993}). Now, by the structure of the syntax tree of the formulas from Figure~\ref{fig:alberti-syntax} (without set comprehensions), it follows that our formula will be of the form
\[
\varphi_1(i_1,\ldots,i_n, B_1,\ldots,B_s) \land \varphi_2(k_1,\ldots,k_m,|B_1|,\ldots,|B_s|) \land \varphi_3(S_1,\ldots,S_n)
\]
where $i$ variables are indices, $k$ variables are integers, $B$ are Boolean algebra expressions and $S$ variables are set variables. Here, $\varphi_1$ takes care of the first two terms of the BNF grammar under symbol $A$ (Figure~\ref{fig:alberti-syntax}), $\varphi_2$ refers to the third and fourth terms under that symbol and $\varphi_3$ refers to the last three expressions under that symbol. 

Now, suppose that we want to eliminate an integer variable $k$.
\[
\exists k. \varphi_1(i_1,\ldots,i_n, B_1,\ldots,B_s) \land \varphi_2(k_1,\ldots,k_m,|B_1|,\ldots,|B_s|) \land \varphi_3(S_1,\ldots,S_n)
\]
We can eliminate this variable by rewriting the formula into
\begin{align*}
\exists k. \exists l_1,\ldots,l_s \ge 0. \varphi_1(i_1,\ldots,i_n,B_1,\ldots,B_s) \land & \varphi_2(k_1,\ldots,k_m,l_1,\ldots,l_s) \land \\ & \varphi_3(S_1,\ldots,S_n) \land \bigwedge_{i = 1}^s |B_i| = l_i
\end{align*}
and noting that the variable $k$ only appears in $\varphi_2$. From which, we can rewrite it to
\begin{align*}
\exists l_1,\ldots,l_s \ge 0. \varphi_1(i_1,\ldots,i_n, B_1,\ldots,B_s) \land & \exists k. \varphi_2(k_1,\ldots,k_m,l_1,\ldots,l_s) \land \\ & \varphi_3(S_1,\ldots,S_n) \land \bigwedge_{i = 1}^s |B_i| = l_i
\end{align*}
Finally, we can apply Presburger's quantifier elimination procedure in the formula 
\[
\exists k. \varphi_2(k_1,\ldots,k_m,l_1,\ldots,l_s)
\]
which yields an equivalent formula not containing $k$ and apply the one-point rule \cite{gries_logical_1993} to eliminate the existential quantifiers in $l_1,\ldots,l_s$. Note that after this step, since we can exchange the order of existential quantifiers (and we only need to eliminate these), we can assume that there are no more existential quantifiers over integer variables. 

Now, suppose that we want to eliminate an index variable $i$. 
\[
\exists i. \varphi_1(i_1,\ldots,i_n, B_1,\ldots,B_s) \land \varphi_2(k_1,\ldots,k_m,|B_1|,\ldots,|B_s|) \land \varphi_3(S_1,\ldots,S_n)
\]
Since $i$ may only occur in $\varphi_1$ it follows that it is enough to eliminate $i$ in the formula 
\[
\exists i. \varphi_1(i_1,\ldots,i_n, B_1,\ldots,B_s)
\]
Now, this case is effectively reducible to the case of elimination of quantifiers in a formula of type $\varphi_2$ by applying the following syntactic transformations: we rewrite index variables into singleton sets, disequalities between indices into disequalities between singleton sets and membership as subset relations between sets. We then apply the elimination procedure that we  give for $\varphi_2$ and $\varphi_3$ and conclude by rewriting the introduces singleton sets into indices, obtaining an equivalent formula. 

Finally, assume that we want to eliminate a set variable $S$ from the formula.
\[
\exists S. \varphi_2(k_1,\ldots,k_m,|B_1|,\ldots,|B_s|) \land \varphi_3(S_1,\ldots,S_n)
\]
Here we observe that the relations in $\varphi_3$ can be encoded in $\varphi_2$. If we have a relation $S_1 \subseteq S_2$ this is equivalent to the condition $|S_1 \setminus S_2| = 0$. Similarly, the condition $S_1 \not\subseteq S_2$ can be encoded by $|S_1 \setminus S_2| > 0$. Finally, the condition $S_1 = S_2$ is equivalent to $S_1 \subseteq S_2$ and $S_1 \supseteq S_2$. Thus, in what follows we may only be concerned with eliminating the existential quantifier from the formula
\[
\exists S. \varphi_2(k_1,\ldots,k_m,|B_1|,\ldots,|B_s|) 
\]

Again, we introduce existentially quantified variables for the cardinalities of the Boolean algebra expressions and move the existential quantification over $S$ to the part of the formula using Boolean algebra expressions, obtaining 
\[
\exists l_1,\ldots,l_s \ge 0.  \varphi_2(k_1,\ldots,k_m,l_1,\ldots,l_s) \land \exists S. \bigwedge_{i = 1}^s |B_i| = l_i 
\]

Now, as in \cite{kuncak_algorithm_2005}  we rewrite the equation $|B_i| = l_i$ as 
\[
\exists l_{\beta_1},\ldots,l_{\beta_{2^n}} \ge 0. \sum_{j = 1}^{2^n} \llbracket B_i \rrbracket_{\beta_j} \cdot l_{\beta_j} = l_i \land \bigwedge_{j = 1}^{2^n} |E_{\beta_j}| = l_{\beta_j}
\]
thus, the original formula can be rewritten into
\begin{align*}
\exists l_1,\ldots,l_s, l_{\beta_1},\ldots,l_{\beta_{2^n}} \ge 0.  & \varphi_2(k_1,\ldots,k_m,l_1,\ldots,l_s, l_{\beta_1},\ldots,l_{\beta_{2^n}}) \land \\ & \exists S. \bigwedge_{j = 1}^{2^n} |E_{\beta_j}| = l_{\beta_j}
\end{align*}

We now need to eliminate the occurrences of $S$ in the elementary Venn regions $E_{\beta_j}$. The solution is to pair together elementary Venn regions such that $\beta(S) = 1$ and $\beta(S) = 0$ but are otherwise identical (for instance if $S := S_1$ then $\beta=1x$ and $\beta' = 0x$ where $x \in \{0,1\}^*$, would be paired together). Now, one reorders the Venn regions in a way that the element $2j$ and $2j-1$ are couples of such pairs. The list of elementary Venn regions where the variable $S$ is eliminated is indexed by $E_{\beta_j}'$. Then, the above formula is equivalent to
\begin{align*}
\exists l_1,\ldots,l_s, l_{\beta_1},\ldots,l_{\beta_{2^n}} \ge 0.  & \varphi_2(k_1,\ldots,k_m,l_1,\ldots,l_s, l_{\beta_1},\ldots,l_{\beta_{2^n}}) \land \\ & \bigwedge_{j = 1}^{2^{n-1}} |E_{\beta_j}'| = l_{\beta_{2j}}+l_{\beta_{2j-1}}
\end{align*}

One then iterates the process by eliminating the existentially quantified integer variables. Note that when there is only one set variable, the constraint reduces to the satisfiability of the Presburger arithmetic subformula.

\begin{cor}
The theory of sets with cardinalities is uniform interpolating. 
\end{cor}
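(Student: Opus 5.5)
The corollary should follow from Proposition~\ref{prop:unifqe} once we know that the theory of sets with cardinalities eliminates existential quantifiers. So the plan is to make the procedure sketched above into a proof that every formula $\exists \overline{e}.\,\phi(\overline{e},\overline{y})$, with $\phi$ quantifier-free, is equivalent modulo the theory to a quantifier-free formula $\phi'(\overline{y})$. Proposition~\ref{prop:unifqe} then gives $\phi'$ as the $T$-cover.

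The reduction goes in the following order.

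1. Put $\phi$ into disjunctive normal form and distribute $\exists$ over the disjuncts. Then eliminate one quantified variable at a time from a conjunction of literals, which is legitimate since existential quantifiers commute.

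2. \textbf{Integer variables.} Abstract each cardinality term $|B_i|$ by a fresh $l_i\ge 0$ and apply Presburger quantifier elimination to the arithmetic part. Afterwards, reinstate $l_i := |B_i|$ by the one-point rule.

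3. \textbf{Index variables.} Replace the index $i$ by a set variable $S_i$ constrained by $|S_i|=1$. Membership $i\in B$ becomes $S_i\subseteq B$, and equalities and disequalities between indices become (dis)equalities of singletons. This turns the problem into eliminating a set variable, after which we translate back.

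4. \textbf{Set variables.} First encode all Boolean-algebra atoms as cardinality literals ($|S_1\setminus S_2|=0$, $>0$, and so on). Then expand each $|B_i|$ as a sum over the Venn regions $E_\beta$. Finally, merge each pair of regions that differ only in the membership of $S$: the pair $|E_\beta|=l_\beta$, $|E_{\beta'}|=l_{\beta'}$ is replaced by $|E'|=l_\beta+l_{\beta'}$. This leaves an existential Presburger problem over the $l$'s, which step 2 disposes of.

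The step I expect to be the main obstacle is the \textbf{correctness of the Venn-region merge} in step 4. We need
\[
\exists S.\,\bigwedge_j |E_{\beta_j}|=l_{\beta_j} \;\Longleftrightarrow\; \bigwedge_j |E'_j| = l_{\beta_{2j}}+l_{\beta_{2j-1}}.
\]

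The left-to-right direction is immediate, since $E'_j = E_{\beta_{2j}} \sqcup E_{\beta_{2j-1}}$.

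For right-to-left, we construct $S$ region by region. Inside each $E'_j$ we choose a subset of size exactly $l_{\beta_{2j}}$, whose complement in $E'_j$ then has size $l_{\beta_{2j-1}}$. This works for finite cardinalities. To handle infinite regions, we either interpret the $l$'s in the extended naturals as in \cite{kuncak_algorithm_2005}, or assume a finite index universe. Whichever convention the theory uses must also be matched in the Presburger step.

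One more point needs care: the index-to-singleton translation must be reversible, so that the resulting quantifier-free formula mentions only free indices, sets, and integers. This holds because the only quantified symbols are the ones being eliminated.

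With these checks in place, existential quantifier elimination holds, and Proposition~\ref{prop:unifqe} yields the corollary.
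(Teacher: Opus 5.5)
Your plan follows the paper's own route: DNF, Presburger elimination for integer variables, singletons for indices, Venn-region pairing for sets, then Proposition~\ref{prop:unifqe}. It also inherits a genuine gap from that sketch, located in your step 2.

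Presburger arithmetic eliminates quantifiers only in a language that contains the divisibility (congruence) predicates $K \mid T$. The grammar of Figure~\ref{fig:alberti-syntax} has only $\le$, $+$ and multiplication by constants. So the formula returned by Presburger elimination is in general not a formula of the theory, and the same holds for your steps 3 and 4, which both end in step 2.

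This is not a presentational issue: without divisibility atoms the corollary is false. Take $\exists k.\, k_1 = k + k$.
\begin{itemize}
\item In each model, every quantifier-free formula whose only free variable is $k_1$ defines a finite union of intervals.
\item For each odd $m$, the literal $k_1 \neq m$ is a residue.
\item A cover would therefore have to define exactly the even numbers, which no such formula does.
\end{itemize}
Set quantifiers alone already cause the problem. With finite universes, as in BAPA, $\exists S.\,(S \subseteq x \land |S| = |x \setminus S|)$ says that $|x|$ is even.

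The repair is to work in the BAPA language of \cite{kuncak_algorithm_2005}, which includes atoms $K \mid T$. Then Presburger elimination stays inside the language, and substituting $|B_i|$ back for $l_i$ yields atoms $K \mid T(|B_1|,\ldots,|B_s|)$, which are again legal. The rest of your argument then goes through, including your correct verification of the Venn-region merge for finite cardinalities.

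A smaller point concerns step 3. Your justification that the index-to-singleton translation is reversible (``the only quantified symbols are the ones being eliminated'') does not do the work.
\begin{itemize}
\item An atom such as $i = j$ becomes $j \in S_i$, which is not a cardinality literal. To run step 4 you must therefore turn the free indices into singletons $S_j$ as well.
\item The output of step 4 then contains linear constraints over Venn regions that mix the $S_j$ with other set terms, and the grammar has no singleton former to express them.
\end{itemize}
Translating back needs a finite case split on the equalities among the free indices and on their memberships in the regions. After the split, each $|E \cap S_j|$ becomes $0$ or $1$, and each $|E \setminus (S_{j_1}\cup\cdots\cup S_{j_r})|$ becomes $|E|$ minus a constant.
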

\begin{proof}
The proof follows from the quantifier-elimination procedure above and Proposition~\ref{prop:unifqe}.
\end{proof}

\begin{exa}
Let us show how to eliminate existential quantifiers in the expression $i \in S$. First, we consider how to eliminate the existential quantifier from $\exists i. i \in S$, then how to eliminate the existential quantifier from $\exists S. i \in S$. 

In the first case, the formula is translated into $\exists \{i\}. |\{i\} \setminus S| = 0 \land |\{i\}| = 1$. Solving in the cardinality equations, we get that $l_{(0,1)} = 0$ and $l_{(1,1)} = 1$ with $l_{(1,0)}, l_{(1,1)}$ non-negative. Thus, the system of equations $|S^c| = l_{(0,0)}$, $|S| = 1 + l_{(1,0)}$ is equivalent to the formula $|S| \ge 1$. 

In the second case, the formula is translated to $\exists S. |\{i\} \setminus S| = 0 \land |\{i\}| = 1$. Reusing the cardinality equations of the previous paragraph, and solving for $\{i\}$, we have that the formula is equivalent to $|\{i\}| = 1 \land |\{i\}^c| \ge 0$, which is trivially true. Thus, the formula is equivalent to the formula $i = i$. 
\end{exa}

\subsection{Uniform Interpolation for the Simple Flat Array Fragment}
\label{section:sff}

That the simple flat array fragment eliminates existential quantifiers follows from an application of the quantifier elimination techniques from \cite{feferman_first_1959}, which have been later rediscovered in \cite{ghilardi_higher-order_2020}. Again, these techniques must be adapted to the specific language of Figure~\ref{fig:alberti-syntax}. The technique extends the method of the previous section with a methodology to eliminate existentially quantified array variables. This further step requires us to witness that certain Venn regions are non-empty, which is done with existentially quantified formulas of the element theory. We now sketch the quantifier-elimination procedure for the case of our signature.

\subsubsection*{Existential quantifier-elimination procedure}

As in Section~\ref{section:uibapa}, it is sufficient to show how to eliminate each existential quantifier from a conjunction of literals of the theory. By the structure of the syntax tree of the formulas from Figure~\ref{fig:alberti-syntax}, it follows that our formula will be of the form
\begin{align}
\label{eq:original}
\begin{split}
& \varphi_1(i_1,\ldots,i_n, B_1,\ldots,B_s) \land \varphi_2(k_1,\ldots,k_m,|B_1|,\ldots,|B_s|) \land \\ & \varphi_3(S_1,\ldots,S_n) \land \bigwedge_{j = l}^{n} S_j = \{ i \mid \varphi_j(a_1[i],\ldots,a_t[i], \overline{c}) \}
\end{split}
\end{align}
where $i$ variables are indices, $k$ variables are integers, $B$ are Boolean algebra expressions, $S$ variables are set variables and $c$ are element theory constants. Here, $\varphi_1$ takes care of the first two terms of the BNF grammar under symbol $A$ (Figure~\ref{fig:alberti-syntax}), $\varphi_2$ refers to the last three expressions under that symbol, and $\varphi_3,\varphi_4$ refer to the third and fourth expressions under that symbol. 

Now, suppose that we want to eliminate an array variable $a$.
\begin{align*}
\exists a. & \varphi_1(i_1,\ldots,i_n, B_1,\ldots,B_s) \land \varphi_2(k_1,\ldots,k_m,|B_1|,\ldots,|B_s|) \land \\ & \varphi_3(S_1,\ldots,S_n) \land \bigwedge_{j = l}^{n} S_j = \{ i \mid \varphi_j(a_1[i],\ldots,a_t[i], \overline{c}) \}
\end{align*}
without loss of generality assume that $a := a_1$. 

Each model of this formula determines a unique model of the following and viceversa.
\begin{align}
\label{eq:qelimfv}
\begin{split}
\exists a. & \varphi_1(i_1,\ldots,i_n, B_1,\ldots,B_s) \land \varphi_2(k_1,\ldots,k_m,|B_1|,\ldots,|B_s|) \land \\ & \varphi_3(S_1,\ldots,S_n) \land \bigwedge_{i = 1}^n S_i = \bigcup_{\beta \models S_i} S^{\beta} \land \\ & \bigwedge_{\beta = 1}^{2^{n}} S^{\beta} = \{ i \mid i \in S_1^{\beta(1)} \land \ldots \land i \in S_{l-1}^{\beta(l-1)} \land \varphi^{\beta(l..n)}(a[i],\ldots,a_t[i], \overline{c}) \}
\end{split}
\end{align}
where the advantage is that the formulas in the set comprehensions form a partition of the index set.

We will show that this formula is equivalent to the following formula
\begin{align}
\label{eq:afterqefv}
\begin{split}
\exists &S'^{\beta_1},\ldots,S'^{\beta_{2^n}}.  \varphi_1(i_1,\ldots,i_n, B_1',\ldots,B_s') \land \varphi_2(k_1,\ldots,k_m,|B_1'|,\ldots,|B_s'|) \land \\ & \varphi_3(S_1',\ldots,S_n') \land \bigwedge_{i = 1}^n S_i' = \bigcup_{\beta \models S_i} S'^{\beta} \land \bigwedge_{i = 1}^n S_i = \bigcup_{\beta \models S_i} S^{\beta} \land \\ & \bigwedge_{\beta = 1}^{2^n} S^{\beta} = \{ i \mid i \in S_1^{\beta(1)} \land \ldots \land i \in S_{l-1}^{\beta(l-1)} \land  \exists v. \varphi^{\beta(l..n)}(v,a_2[i],\ldots,a_t[i], \overline{c}) \} \land  \\
& \bigwedge_{\beta = 1}^{2^n} S'^{\beta} \subseteq S^{\beta} \land\bigwedge_{\beta_1 = 1}^{2^n} \bigwedge_{\beta_2 = 1, \beta_1 \neq \beta_2}^{2^n} S'^{\beta_1} \cap S'^{\beta_2} = \emptyset \land S_1 \cup \ldots \cup S_n = S_1' \cup \ldots \cup S_n'
\end{split}
\end{align}
which has eliminated the existential quantifier on the array variable $a$.

Let us then show that Formulas~(\ref{eq:qelimfv}) and (\ref{eq:afterqefv}) are equivalent. 

Indeed, assume that Formula~(\ref{eq:qelimfv}) is true. Then there exists an array $a$ such that the formula holds. Now, let 
\[
S'^{\beta} := \{ i \mid i \in S_1^{\beta(1)} \land \ldots \land i \in S_{l-1}^{\beta(l-1)} \land \varphi^{\beta}(a[i],a_2[i],\ldots,a_t[i],\overline{c}) \} 
\]
It follows that $S_j'$ partitions the universe as required by the second formula. Moreover, $S'^{\beta} \subseteq S^{\beta}$. Finally, the rest of the formula is satisfied by hypothesis and the definition of $S'^{\beta}$. 

Conversely, let us show that if Formula~(\ref{eq:afterqefv}) holds then Formula~(\ref{eq:qelimfv}) holds too. Given any index $i$, it has to belong to some region $S'^{\beta}$ since $S'^{\beta}$ forms a partition of $I$. Since $S'^{\beta} \subseteq S^{\beta}$, it follows that if we define $a[i] = v$ satisfying the definition of $S^{\beta}$ then $S'^{\beta} \subseteq S^{\beta}$ (where $S^{\beta}$ is defined as in (\ref{eq:qelimfv})). Since $S^{\beta}$ (as defined in (\ref{eq:qelimfv})) and $S'^{\beta}$ partition $I$, it follows that $S'^{\beta} = S^{\beta}$ and thus, the rest of the formula in (\ref{eq:qelimfv}) is also satisfied. 

Formula~(\ref{eq:afterqefv}) is equisatisfiable with the following:
\begin{align}
\begin{split}
\exists &S'^{\beta_1},\ldots,S'^{\beta_{2^n}}.  \varphi_1(i_1,\ldots,i_n, B_1',\ldots,B_s') \land \varphi_2(k_1,\ldots,k_m,|B_1'|,\ldots,|B_s'|) \land \\ & \varphi_3(S_1',\ldots,S_n') \land \bigwedge_{i = 1}^n S_i' = \bigcup_{\beta \models S_i} S'^{\beta} \land \bigwedge_{i = 1}^n S_i = \bigcup_{\beta \models S_i} S^{\beta} \land \\ & \bigwedge_{j = l}^{n} S_{j} = \{ i \mid \exists v. \varphi_j(v,a_2[i],\ldots,a_t[i], \overline{c}) \} \land  \\
& \bigwedge_{\beta = 1}^{2^n} S'^{\beta} \subseteq S^{\beta} \land\bigwedge_{\beta_1 = 1}^{2^n} \bigwedge_{\beta_2 = 1, \beta_1 \neq \beta_2}^{2^n} S'^{\beta_1} \cap S'^{\beta_2} = \emptyset \land S_1 \cup \ldots \cup S_n = S_1' \cup \ldots \cup S_n'
\end{split}
\end{align}

To eliminate the remaining existential quantifiers of $S'^{\beta_1},\ldots,S'^{\beta_{2^n}}$, observe that we may eliminate them as in Section~\ref{section:uibapa}. 

Iterating this process, we obtain a formula such as (\ref{eq:original}), but with existential quantifiers in the set comprehensions. 
\begin{align}
\label{eq:eliminated}
\begin{split}
& \varphi_1(i_1,\ldots,i_n, B_1,\ldots,B_s) \land \varphi_2(k_1,\ldots,k_m,|B_1|,\ldots,|B_s|) \land \\ & \varphi_3(S_1,\ldots,S_n) \land \bigwedge_{j = l}^{n} S_j = \{ i \mid \exists v_1,\ldots,v_t. \varphi_j(v_1,\ldots,v_t, \overline{c}) \}
\end{split}
\end{align}

If $\exists v_1,\ldots,v_t. \varphi_j(v_1,\ldots,v_t, \overline{c})$ is true then $S_j$ is the universal set. Otherwise, it is the empty set. Since the theory of Figure~\ref{fig:alberti-syntax} does not contain syntax to refer to the universe set we have to eliminate the occurrences of the universal set using the absolute complement of the empty set.

\begin{cor}
The simple flat array fragment is uniform interpolating.
\end{cor}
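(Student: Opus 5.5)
The plan is to derive the corollary from Proposition~\ref{prop:unifqe}, exactly as was done for the theory of sets with cardinalities in Section~\ref{section:uibapa}. The work is to show that the simple flat array fragment eliminates existential quantifiers, i.e.\ that for every existential formula $\exists \overline{e}.\phi(\overline{e},\overline{y})$ of the language of Figure~\ref{fig:alberti-syntax} there is an equivalent quantifier-free formula over $\overline{y}$. Once that is in place, Proposition~\ref{prop:unifqe} supplies the $T_{\text{F}}$-cover.

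First, reduce to one quantifier over a conjunction of literals. I would put $\phi$ in disjunctive normal form, distribute $\exists$ over the disjuncts, and then eliminate the variables one at a time, innermost first, since existential quantifiers commute. I would also order the eliminations so that array variables are removed first, then set variables, then index and integer variables. A conjunction of literals is brought into the shape~(\ref{eq:original}). For integer, index and set variables, the argument is the one already given in Section~\ref{section:uibapa}, with the set comprehensions treated as opaque set terms:
\begin{itemize}
\item Integer variables are eliminated by Presburger elimination.
\item Index variables are eliminated by turning them into singleton sets.
\item Set variables are eliminated by the Venn-region pairing of \cite{kuncak_algorithm_2005}.
\end{itemize}

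The core step is eliminating an array variable $a=a_1$. Following the transformation (\ref{eq:qelimfv})$\Leftrightarrow$(\ref{eq:afterqefv}), I refine the comprehension atoms into the $2^n$ pairwise disjoint regions $S^\beta$. I then replace each occurrence of $a[i]$ inside an element formula by a fresh existentially quantified element $v$. Finally I introduce fresh set variables $S'^\beta$ that partition the index set and satisfy $S'^\beta\subseteq S^\beta$.

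Equivalence has two directions:
\begin{itemize}
\item Given $a$, define $S'^\beta$ as the actual regions.
\item Conversely, given the $S'^\beta$, pick for each index $i\in S'^\beta$ a witness $v$ of the element formula and set $a[i]:=v$. This uses choice and the fact that arrays are all functions in a standard model, so Corollary~\ref{cor:functional} and Proposition~\ref{prop:embedding} justify working with standard models.
\end{itemize}
The new set variables $S'^\beta$ are then removed by the set-variable step. Iterating over all array variables yields a formula of shape (\ref{eq:eliminated}), whose comprehensions only mention the element constants $\overline{c}$ under $\exists v_1,\ldots,v_t$.

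The last step removes these residual comprehensions. Each body $\exists \overline{v}.\varphi_j(\overline{v},\overline{c})$ is independent of $i$, so the comprehension denotes either $\emptyset$ or its complement $\emptyset^c$. Writing $\chi_j$ for this element-theory sentence, I replace $S_j$ by a case split, $(\chi_j\wedge S_j=\emptyset^c)\vee(\lnot\chi_j\wedge S_j=\emptyset)$.

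The main obstacle is that $\chi_j$ must itself be quantifier-free in the element theory, which is equality with uninterpreted symbols and does not admit quantifier elimination in general. I would first try to sidestep this by keeping the elimination inside the comprehension: the language allows arbitrary element formulas $\varphi$ under $\{i\mid\cdot\}$, so I would write $\{i\mid \exists\overline{v}.\varphi_j\}$ directly, as long as the element theory admits covers for those residual existentials. If needed, I would take the uniform interpolation of EUF (known from \cite{gulwani_cover_2008}) for $\exists\overline{v}$, which makes the resulting formula quantifier-free in the official syntax. A secondary check is that the converse direction of the array step really yields a model of $T_{\text{F}}$, i.e.\ that witnesses can be chosen uniformly per region; this is where the partition property of the $S'^\beta$ is used.
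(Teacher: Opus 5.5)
\paragraph{Same route as the paper; the gap is in your last step.}
Your main line is the paper's own argument. You reduce to conjunctions of literals, eliminate integers by Presburger, indices via singletons, sets via Venn-region pairing, and arrays via the step (\ref{eq:qelimfv})$\Leftrightarrow$(\ref{eq:afterqefv}), and then invoke Proposition~\ref{prop:unifqe}. You are also right that the last step is the weak point, and the paper glosses over it. Formula~(\ref{eq:eliminated}) still has $\exists\overline v$ inside the comprehensions, and EUF does not eliminate it. For instance, $\exists v.\,f(v)=c$ is not equivalent to any quantifier-free formula in $c$: take $f(x)=x+1$ and $c=0$, once on $\mathbb{N}$ and once on $\mathbb{Z}$. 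Replacing $S_j$ by $\emptyset$ or $\emptyset^c$ therefore tacitly assumes the theory fixes the truth value of $\chi_j$; the paper's own example uses Presburger elimination on elements. That collapse, which you reproduce, also needs every array in the body to be bound. In a cover problem the free arrays $\overline b$ among $\overline y$ remain in the body $\exists\overline v.\varphi_j(\overline v,\overline b[i],\overline c)$, which then still depends on $i$.

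The genuine gap is your proposed repair. Substituting the EUF cover of $\exists\overline v.\varphi_j$ for the body is not an equivalence. A cover is only the strongest quantifier-free consequence: for $\exists v.\,f(v)=c$ it is $\top$, so in a fixed model the comprehension can jump from $\emptyset$ to $\emptyset^c$. The resulting formula is then not $T_{\text{F}}$-equivalent to the input, and Proposition~\ref{prop:unifqe}, which needs genuine elimination, gives you nothing. To use covers you would need a separate argument covering two points:
\begin{itemize}
\item the substituted formula is still a residue, which depends on the polarity in which the body occurs after the Venn-region elimination;
\item it implies every other residue, e.g.\ because every $T_{\text{F}}$-model of it embeds into one satisfying the original existential formula.
\end{itemize}
Neither your proposal nor the paper supplies this. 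As it stands, the only version of the last step compatible with the elimination route is your first option: keep $\exists\overline v$ inside the comprehension. That means reading ``quantifier-free'' as allowing element quantifiers in comprehension bodies, which is in effect what Formula~(\ref{eq:eliminated}) does. Alternatively, assume the element theory has quantifier elimination. Either way you simply keep the comprehension, and the case split on $\chi_j$ is not needed.

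A small point: Corollary~\ref{cor:functional} and Proposition~\ref{prop:embedding} concern $\tc$. For the converse direction of the array step in $T_{\text{F}}$ you only need that arrays are interpreted as all functions, which holds by the standard semantics of $T_{\text{F}}$.
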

\begin{proof}
The proof follows from the quantifier-elimination procedure above and Proposition~\ref{prop:unifqe}.
\end{proof}

\begin{exa}
Let us show how to eliminate existential quantifiers in the expression $\exists x. |\{ i \mid x[i] = 0 \} | = 1$. In a first step, the existential quantifier on the array variable $x$ is transformed into an existential quantifier on an element variable as $|\{ i \mid \exists v. v = 0 \} | = 1$. Now the interpreted set is either the universal set or the empty set. Using the quantifier-elimination procedure for Presburger arithmetic we get that $\exists v. v = 0$ is equivalent to $\top$. Thus, the original formula is equivalent to the quantifier-free formula $|\emptyset^c | = 1$. This can be written as $|\emptyset^c | = 1$.
\end{exa}

\section{Conclusion}
\label{section:conclusion}
We have investigated the quantifier-free interpolation property for several theories of data structures introduced after the original paper of Kapur, Majumdar and Zarba \cite{kapur_interpolation_2006}. 

We have proved that the pointwise generalisation of relations introduced in the theory of combinatory array logic  preserves the (general) quantifier-free interpolation property of the quantifier-free array theory with diffs. Unlike in \cite{bruttomesso_quantifier-free_2012}, it is necessary to introduce iterated diff functions that are furthermore parameterized over the relation symbol used. Profiting from recent advances in \cite{ghilardi_interpolation_2024}, we were also able to prove that this theory does not admit uniform interpolants. 

In order to compensate for the lack of uniform interpolants, we also investigated uniform interpolation in data structure theories that use cardinality constraints. We investigated a theory of sets with cardinalities and set comprehensions that has been repeatedly used in the verification of distributed protocols \cite{dragoi_logic-based_2014, dragoi_psync_2016, gleissenthall_cardinalities_2016, alberti_cardinality_2017, damian_communication-closed_2019}. Reusing well-known quantifier elimination results \cite{feferman_first_1959}, we were able to establish that this theory has uniform interpolants, thanks to the existence of existential quantifier elimination procedures. The problem of whether combinatory array logic or the simple flat array fragment admitted quantifier-free interpolation was open in \cite{ghilardi_interpolation_2023}.

Our results should be contrasted with other (negative) results such as those in \cite{hoenicke_interpolation_2019} which show that array logics that are strongly related to our fragments do not admit the quantifier-free interpolation property. It would also be interesting to explore the application of these interpolation algorithms in the verification of distributed protocols such as those in \cite{berkovits_verification_2019}, which explicitly mentions the computation of interpolants between first-order and cardinality properties.

\bibliographystyle{alphaurl} 
\bibliography{references}

\end{document}